\documentclass[journal, headsepline]{IEEEtran}

\usepackage{amsmath,amsfonts,amssymb,amsthm}
\usepackage{mathrsfs}
\usepackage{comment,blkarray}
\usepackage{multirow,bigdelim}
\usepackage{cite}
\usepackage[ruled,commentsnumbered, vlined]{algorithm2e}
\usepackage{tikz}
\usetikzlibrary{arrows,automata}
\usepackage[latin1]{inputenc}
\usepackage{verbatim}
\usepackage{graphicx}
\usepackage{cases}
\usepackage{booktabs}
\usepackage{caption}												
\usepackage{subcaption}	
\usepackage{etoolbox}

\usepackage{enumitem}
\usepackage{mathtools}

\makeatletter

\newtheorem{thm}{Theorem}
\newtheorem{prop}{Proposition}
\theoremstyle{definition}
\newtheorem{example}{Example}
\allowdisplaybreaks

\begin{document}

\title{Moving Target SAR Imaging Using Planar Arrays And Multidimensional Chinese Remainder Theorem (MD-CRT)--Part II: Two Subarray Designs}

\author{Guangpu Guo, \IEEEmembership{Graduate Student Member}, \IEEEmembership{IEEE}, and Xiang-Gen Xia, \IEEEmembership{Fellow}, \IEEEmembership{IEEE} 
       
\thanks{G. Guo and X.-G. Xia are with the Department of Electrical and Computer Engineering,
  University of Delaware, Newark, DE 19716, USA
  (e-mails: guangpu@udel.edu and xxia@ee.udel.edu).
  This work was supported in part by the National Science Foundation (NSF) under Grant CCF-2246917.
}
}

\maketitle

\begin{abstract}
Based on the framework proposed in Part I, the Part II of this two-part paper investigates two-subarray designs for moving target SAR imaging using planar antenna arrays and the multidimensional Chinese remainder theorem (MD-CRT). In this Part II, we focus on the performance analysis and the detailed two planar subarray designs. In particular, we study a common-scaling two-subarray design, under which the two subarrays share the same scaling factor in the MD-CRT formulation. Under this design, ambiguity resolution can be performed on a common integer frequency vector. As a result, the same unambiguous range as in the general two-subarray framework in Part~I is preserved, while the sufficient conditions for robust recovery become weaker and the corresponding reconstruction error bounds become tighter.
Within this common-scaling design, we compare the proposed planar array framework with a conventional separated scheme, in which the motion-induced cross-range shift is recovered by a one-dimensional CRT-based method and the target height is estimated by cross-track interferometric processing. Under the same platform size and minimum antenna spacing constraints, the proposed planar array framework can realize the common-scaling design, whereas the corresponding one-dimensional non-uniform linear array scheme does not admit such a design. With this design, the planar array framework leads to a weaker sufficient condition for robust recovery and thus performs better in moving target imaging.
We also compare several planar array designs under fixed platform size and minimum antenna spacing. The analysis shows that recovery performance depends not only on the number of antennas but also on the array geometry. In particular, non-separable planar array geometries can provide better robustness than separable ones when their antenna numbers are comparable. Numerical results validate the theoretical analysis.
\end{abstract}

\begin{IEEEkeywords}
Three-dimensional (3D) moving target SAR imaging, two-dimensional discrete Fourier transform (2D-DFT), robust multidimensional Chinese remainder theorem (MD-CRT), separable and non-separable planar antenna arrays.
\end{IEEEkeywords}

\section{Introduction}\label{s1}

In Part~I of this two-part paper, we developed a general framework for moving target SAR imaging with planar antenna arrays~\cite{part1}. 
For a target moving over a three-dimensional terrain (or in air), the motion-induced cross-range shift and the target height were formulated in a unified two-dimensional vector form and were shown to admit a matrix-modulus vector remainder representation after two-dimensional discrete Fourier transform (2D-DFT) processing across the planar array~\cite{ar}. 
This leads to a general multi-subarray ambiguity resolution framework for enlarging the unambiguous range, based on the robust multidimensional Chinese remainder theorem (MD-CRT)~\cite{MD1,MD2,gMDCRT,mstage-mdcrt}. 
Part~I also established sufficient conditions and explicit reconstruction error bounds for robust recovery under quantization and noise.

While Part~I established the general framework, it did not address in detail how planar subarrays should be designed for good performance. 
Compared with the one-dimensional non-uniform linear arrays studied in~\cite{gangli}, the two-dimensional planar arrays considered in this two-part paper involve matrix-valued moduli rather than scalar moduli, and therefore admit substantially more design degrees of freedom. 
This additional freedom also makes the design problem considerably more challenging. 
Part~I showed that, in the two-dimensional setting, the choice of the associated 2D-DFT matrix modulus affects the propagation of vector remainder errors, and an optimal construction was derived under fixed array geometry and antenna number constraints. 
However, even under this optimal modulus construction, the recovery performance still depends on the array geometry itself, namely, on the lattice generator matrix that defines the planar array. 
The focus of this Part~II is therefore to investigate how planar subarrays should be designed, and how different array geometries behave under the same physical constraints.

In this Part~II, we focus on the two-subarray case, which is the simplest nontrivial setting that already captures the main design issues of interest while allowing the analysis to be carried out analytically. 
Within this setting, we first study a common-scaling design, under which the two subarrays share the same scaling factor in the corresponding MD-CRT formulation. 
Under this design, ambiguity resolution can be carried out on a common integer frequency vector. 
As a result, the recovery procedure becomes simpler than that in the general formulation in Part~I, since no additional scaling is needed to convert the congruence system into an integer-valued form, while the same unambiguous range as in the general two-subarray framework in Part~I is preserved. 
Furthermore, in both noiseless and noisy settings, the sufficient conditions for robust recovery become weaker, and the resulting reconstruction error bounds become smaller than those in the general formulation.
Therefore, the common-scaling design provides better robustness. 
Moreover, this common-scaling design can be realized naturally in the planar array setting considered here, whereas for the one-dimensional non-uniform linear array scheme in~\cite{gangli}, it is not possible when the two subarrays have the same number of antenna elements.

Based on this common-scaling design, we first compare the proposed planar array framework with a conventional separated scheme. 
In the separated scheme, the motion-induced cross-range shift is recovered by a one-dimensional CRT-based method with two non-uniform linear arrays~\cite{gangli}, while the target height is estimated by cross-track interferometric processing~\cite{xiaoweili}, which can be thought of as a two-point DFT in the vertical direction. 
Under the same platform size and minimum antenna spacing constraints, the two schemes have the same unambiguous range. 
However, because the proposed planar array framework can realize the common-scaling design whereas the corresponding one-dimensional non-uniform linear array scheme cannot, the planar array framework offers clear robustness advantages over the separated scheme. 
In particular, in the noiseless setting, the proposed planar array framework avoids the additional condition required in the one-dimensional separated scheme for robust recovery and yields a smaller reconstruction error bound for the cross-range shift.
In the noisy setting, numerical results further confirm the robustness advantage of the proposed framework for recovering both the cross-range shift and the height.

We then compare several planar array designs under fixed platform size and minimum antenna spacing. 
In many existing planar array designs, the geometry is chosen in a separable form, for example, a rectangular structure with a diagonal lattice generator matrix~\cite{FDA_MIMO_Planar,Incremental3DSAR}. 
Such designs are simple and convenient, since the two spatial directions can be treated separately. 
However, a two-dimensional matrix-valued design has much more choices, which raises the question of whether a non-separable planar array design can provide better performance under the same physical constraints. 
The analysis and simulations show that recovery performance depends not only on the number of antennas, but also on the array geometry. 
In the noiseless case, under the considered constraints, non-separable planar array designs can achieve a larger unambiguous range than separable ones. 
In the noisy case, non-separable common-scaling planar array designs can provide better robustness than separable common-scaling planar array designs when the latter use more antennas and occupy a larger platform area. 
This advantage comes from the greater design flexibility of non-separable array designs under the minimum antenna spacing constraint. 
Further theoretical studies on the difference between non-separable and separable matrix moduli in robust MD-CRT can be found in~\cite{mstage-mdcrt}.

The remainder of this paper is organized as follows. 
Section~\ref{s2} briefly reviews the notation and problem setting needed from Part~I. 
Section~\ref{s3} introduces the common-scaling two-subarray design and studies its recovery properties in both noiseless and noisy settings. 
Section~\ref{s4} compares the proposed planar array framework with the separated processing scheme and then compares several planar array designs under the same physical constraints. 
Section~\ref{s5} presents numerical results, and Section~\ref{s6} concludes the paper.

\section{Problem Setting and General Two-Subarray Formulation}\label{s2}

In this section, we briefly review the general framework in Part I when only two subarrays are considered, which will be used throughout this Part II. 
The single-array case is included as the special case in which only one matrix modulus is used. All the notations in this Part II follow from Part I, unless otherwise specified.

\subsection{Problem Formulation for Two Subarrays and Unambiguous Range}

Consider a moving target observed by two planar subarrays in a SAR system. 
The task is to jointly recover the motion-induced cross-range shift and the target height. 

After the standard SAR processing and the compensation of the image location term, the antenna-dependent image obtained from the $\mathbf{u}$-th antenna of the $j$-th planar subarray, $j=1,2$, can be written as
\begin{equation}\label{eq:s2_signal_model}
S_{\mathbf u,j}(n)
=
C_0
\exp\!\left(
j\frac{2\pi}{R\lambda}
(\mathbf M_j^\top\mathbf g)^\top\mathbf u
\right)
\delta\!\big(n-n_{x_0}-\Delta_{\rm shift}\big),
\end{equation}
where $C_0$ is the antenna-independent phase term as shown in the equation (12) in Part I \cite{part1}, $R$ is the reference range, $\lambda$ is the carrier wavelength, and $\mathbf u\in\mathbb Z^2$ is the index of an antenna element in the $j$-th planar subarray. 
The matrix $\mathbf M_j\in\mathbb R^{2\times2}$ is the lattice generator matrix of the $j$-th planar subarray, so that the corresponding antenna location of the $\mathbf{u}$-th antenna on the platform is given by
\[
\begin{bmatrix}
x_{\mathbf u,j}\\
z_{\mathbf u,j}
\end{bmatrix}
=
\mathbf M_j\mathbf u.
\]
The unknown parameter vector is
\begin{equation}\label{eq:s2_g_def}
\mathbf g
\triangleq
\begin{bmatrix}
-\Delta_{\rm shift}\Delta_x\\
z_0
\end{bmatrix},
\end{equation}
where $\Delta_{\rm shift}$ is the motion-induced cross-range shift, $\Delta_x$ is the cross-range resolution, and $z_0$ is the target height. 
Therefore, recovering $\mathbf g$ is equivalent to jointly recovering the shift and the height.

For the $j$-th planar subarray, $j=1,2$, Part~I showed that, under a fixed array geometry $\mathbf{M}_j$ and antenna number constraint, a natural and optimal choice of the associated 2D-DFT matrix modulus is
\begin{equation}\label{eq:s2_Nj_def}
\mathbf N_j
=
\alpha_j\,\operatorname{adj}(\mathbf M_j^\top),
\qquad j=1,2,
\end{equation}
where $\alpha_j$ is chosen so that $\mathbf N_j\in\mathbb Z^{2\times2}$, i.e., $\mathbf{N}_j$ is an integer matrix. 
Under this choice, the corresponding frequency vector in \eqref{eq:s2_signal_model} is
\begin{equation}\label{eq:s2_fj_def}
\mathbf f_j
=
\frac{\alpha_j\det(\mathbf M_j)}{R\lambda}\,\mathbf g,
\qquad j=1,2.
\end{equation}
Equivalently, defining the scaling factor
\begin{equation}\label{eq:s2_betaj_def}
\beta_j
\triangleq
\frac{R\lambda}{\alpha_j\det(\mathbf M_j)},
\qquad j=1,2,
\end{equation}
we may write
\begin{equation}\label{eq:s2_g_fj_relation}
\mathbf g
=
\beta_j\,\mathbf f_j,
\qquad j=1,2.
\end{equation}

For the $j$-th planar subarray, we perform the 2D-DFT of $S_{\mathbf u,j}(n)$ in \eqref{eq:s2_signal_model} with respect to the antenna index $\mathbf u$ over one fundamental parallelepiped (FPD) of $\mathbf N_j^\top$, i.e., over the index set $\mathcal N(\mathbf N_j^\top)$. 
Here, for a nonsingular integer matrix $\mathbf A\in\mathbb Z^{2\times 2}$, the associated FPD is defined by
\[
\mathcal N(\mathbf A)
=
\left\{
\mathbf k\in\mathbb Z^2:\,
\mathbf k=\mathbf A\mathbf x,\;
\mathbf x\in[0,1)^2
\right\}.
\]
The corresponding 2D-DFT frequency vector index then lies in $\mathcal N(\mathbf N_j)$.
If $\mathbf f_j$ is integer-valued, then the 2D-DFT peak yields the vector remainder $\mathbf r_j$ of $\mathbf f_j$ modulo $\mathbf N_j$, i.e.,
\begin{equation}\label{eq:s2_exact_congruence}
\mathbf f_j
\equiv
\mathbf r_j
\mod \mathbf N_j,
\qquad
\mathbf r_j\in\mathcal N(\mathbf N_j),
\qquad j=1,2.
\end{equation}
Using \eqref{eq:s2_g_fj_relation}, this is equivalently written as
\begin{equation}\label{eq:s2_g_congruence}
\mathbf g
\equiv
\beta_j\,\mathbf r_j
\mod \beta_j\mathbf N_j,
\qquad j=1,2.
\end{equation}
Thus, the two subarrays produce two folded observations of the same vector $\mathbf g$, and these two observations can be combined through the MD-CRT~\cite{MD1,gMDCRT} to reconstruct $\mathbf g$.

Assume that \eqref{eq:s2_g_congruence} is in the integer-valued form required by the MD-CRT, and let
\begin{equation}\label{eq:s2_R_def}
\mathbf R
\triangleq
\operatorname{lcrm}(\beta_1\mathbf N_1,\beta_2\mathbf N_2),
\end{equation}
where $\operatorname{lcrm}(\cdot,\cdot)$ denotes a least common right multiple (lcrm) of the two scaled matrix moduli. 
Then, by the MD-CRT~\cite{MD1}, $\mathbf g$ can be uniquely recovered from \eqref{eq:s2_g_congruence} if and only if $\mathbf g\in\mathcal N(\mathbf R)$. 
Hence, $\mathcal N(\mathbf R)$ defines an \textit{unambiguous range} of the two-subarray system in the sense that any integer vector $\mathbf g\in\mathcal N(\mathbf R)$ can be uniquely recovered from \eqref{eq:s2_g_congruence}. 
Although the shape of this range depends on the chosen lcrm, its size, i.e., the number of integer vectors in $\mathcal N(\mathbf R)$, is $|\det(\mathbf R)|=|\mathcal N(\mathbf R)|$ that does not depend on the choice of an lcrm $\mathbf{R}$ and is uniquely determined by the two matrix moduli $\beta_1\mathbf N_1$ and $\beta_2\mathbf N_2$.

\subsection{Robust Recovery Under Quantization and Noise}

We first consider the quantization of the unknown frequency vector $\mathbf f_j$.
Throughout this paper, quantization refers to frequency grid quantization caused by evaluating the DFT on a discrete frequency grid.
In general, $\mathbf f_j$ is not necessarily integer-valued. 
Following Part~I, we define
\begin{equation}\label{eq:s2_fj_int}
\mathbf f_j^{\mathrm{int}}
\triangleq
[\mathbf f_j],
\qquad
\mathbf e_j
\triangleq
\mathbf f_j^{\mathrm{int}}-\mathbf f_j,
\qquad j=1,2,
\end{equation}
where $[\cdot]$ denotes componentwise nearest-integer rounding defined in (54) in Part~I. 
Then,
\begin{equation}\label{eq:s2_ej_bound}
\|\mathbf e_j\|_\infty \le \frac{1}{2},
\qquad j=1,2.
\end{equation}
Accordingly, the detected 2D-DFT peak $\mathbf r_j\in\mathcal N(\mathbf N_j)$ is 
\begin{equation}\label{eq:s2_quantized_congruence}
\mathbf f_j^{\mathrm{int}}
\equiv
\mathbf r_j
\mod \mathbf N_j,
\qquad j=1,2.
\end{equation}
Using $\mathbf g=\beta_j\mathbf f_j$ and the approximation model $\mathbf f_j^{\mathrm{int}}=\mathbf f_j+\mathbf e_j$, we may write
\begin{equation}\label{eq:s2_g_congruence_quantized}
\mathbf g
\equiv
\beta_j\mathbf r_j-\beta_j\mathbf e_j
\mod \beta_j\mathbf N_j,
\end{equation}
for $j=1,2$.
Thus, the recovery problem can be viewed as a matrix congruence system for $\mathbf g$ with erroneously detected vector remainders $\beta_j\mathbf r_j$. 

To apply robust MD-CRT, we first convert the recovery problem in \eqref{eq:s2_g_congruence_quantized} into an integer-valued congruence form. 
Let $k$ be a positive scalar such that, for $j=1,2$, the entries of
\begin{equation}\label{eq:s2_Abj_def}
\mathbf A_j
\triangleq
k\beta_j\mathbf N_j
\in\mathbb Z^{2\times2},
\qquad
\mathbf b_j
\triangleq
k\beta_j\mathbf r_j
\in\mathbb Z^2,
\end{equation}
are integers.
We also define
\begin{equation}\label{eq:s2_x_def}
\mathbf x
\triangleq
[k\mathbf g]\in\mathbb Z^2,
\qquad
\boldsymbol\delta
\triangleq
\mathbf x-k\mathbf g,
\end{equation}
and then
\begin{equation}\label{eq:s2_delta_bound}
\|\boldsymbol\delta\|_\infty \le \frac{1}{2}.
\end{equation}
Thus, using \eqref{eq:s2_g_congruence_quantized}, we obtain
\begin{equation}\label{eq:s2_x_congruence}
\mathbf x
=
\mathbf A_j\mathbf n_j+\mathbf b_j+\boldsymbol\eta_j,
\qquad j=1,2,
\end{equation}
where
\begin{equation}\label{eq:s2_eta_def}
\boldsymbol\eta_j
\triangleq
\boldsymbol\delta-k\beta_j\mathbf e_j.
\end{equation}
Hence,
\begin{equation}\label{eq:s2_eta_bound}
\|\boldsymbol\eta_j\|_\infty
\le
\frac{1}{2}
+
\frac{k|\beta_j|}{2},
\qquad j=1,2.
\end{equation}

Let $\mathbf G$ be a greatest common left divisor (gcld) of $\mathbf A_1$ and $\mathbf A_2$, and define
\begin{equation}\label{eq:s2_Lambda_def}
\Lambda
\triangleq
\lambda_{\mathcal L(\mathbf G)},
\end{equation}
where $\lambda_{\mathcal L(\mathbf G)}$ is the minimum nonzero $\ell_\infty$-distance of the lattice generated by $\mathbf G$. 
Also define
\begin{equation}\label{eq:s2_beta_max}
\beta_{\max}
\triangleq
\max\{|\beta_1|,|\beta_2|\}.
\end{equation}

As in Part~I, robust recovery does not change the size of the unambiguous range determined in the integer-valued case. 
It only imposes additional conditions under which robust reconstruction is guaranteed. 
Following Theorem 2 in Part I \cite{part1}, the corresponding noiseless robust recovery result can be stated as follows.

If
\begin{equation}\label{eq:s2_noiseless_condition}
\frac{1}{2}
+
\frac{k\beta_{\max}}{2}
<
\frac{\Lambda}{4},
\end{equation}
then $\mathbf g$ can be robustly recovered from the two detected vector remainders. 
Moreover, the reconstruction error satisfies
\begin{equation}\label{eq:s2_noiseless_bound}
\|\hat{\mathbf g}-\mathbf g\|_\infty
\le
\frac{\beta_{\max}}{2}
+
\frac{1}{k}.
\end{equation}

We next consider the noisy case. 
Suppose that due to the noise, the detected vector remainder $\tilde{\mathbf r}_j$ for the $j$-th subarray has error,
\begin{equation}\label{eq:s2_rtilde_def}
\tilde{\mathbf r}_j
=
\mathbf r_j+\Delta\mathbf r_j,
\qquad j=1,2,
\end{equation}
where $\Delta\mathbf r_j$ denotes the vector remainder error caused by noise. 
Define
\begin{equation}\label{eq:s2_btilde_def}
\tilde{\mathbf b}_j
\triangleq
k\beta_j\tilde{\mathbf r}_j
=
\mathbf b_j+k\beta_j\Delta\mathbf r_j,
\qquad j=1,2.
\end{equation}
Then the corresponding integer-valued congruence system becomes
\begin{equation}\label{eq:s2_x_congruence_noise}
\mathbf x
=
\mathbf A_j\mathbf n_j+\tilde{\mathbf b}_j+\tilde{\boldsymbol\eta}_j,
\qquad j=1,2,
\end{equation}
where
\begin{equation}\label{eq:s2_eta_noise_def}
\tilde{\boldsymbol\eta}_j
=
\boldsymbol\delta-k\beta_j\mathbf e_j-k\beta_j\Delta\mathbf r_j.
\end{equation}
Hence,
\begin{equation}\label{eq:s2_eta_noise_bound}
\|\tilde{\boldsymbol\eta}_j\|_\infty
\le
\frac{1}{2}
+
\frac{k|\beta_j|}{2}
+
k|\beta_j|\,\|\Delta\mathbf r_j\|_\infty,
\qquad j=1,2.
\end{equation}

Following Theorem 3 in Part I \cite{part1}, the corresponding noisy robust recovery result can be stated as follows.

If
\begin{equation}\label{eq:s2_noisy_condition}
\frac{1}{2}
+
\frac{k\beta_{\max}}{2}
+
k\max_{j=1,2}\Big(|\beta_j|\,\|\Delta\mathbf r_j\|_\infty\Big)
<
\frac{\Lambda}{4},
\end{equation}
then $\mathbf g$ can be robustly recovered under noisy vector remainder detection. 
Moreover, the reconstruction error satisfies
\begin{equation}\label{eq:s2_noisy_bound}
\|\hat{\mathbf g}-\mathbf g\|_\infty
\le
\frac{\beta_{\max}}{2}
+
\max_{j=1,2}\Big(|\beta_j|\,\|\Delta\mathbf r_j\|_\infty\Big)
+
\frac{1}{k}.
\end{equation}

\section{A Common-Scaling Two-Subarray Design}\label{s3}

In this section, we consider a special two-subarray design within the general two-subarray framework reviewed in Section~\ref{s2}. 
Its key feature is that the two planar subarrays share the same scaling factor $\beta_j$ in the corresponding MD-CRT formulation. 
Under this condition, ambiguity resolution can be performed on a common integer-valued frequency vector $\mathbf f$. 
As a result, the same unambiguous range as in the general two-subarray framework is preserved, while the sufficient conditions for robust recovery become weaker and the resulting reconstruction error bounds become tighter, leading to better robustness. 
This common-scaling design provides the basis for the theoretical comparisons in the remainder of this paper. 
We also note that the common-scaling design is not possible for the one-dimensional non-uniform linear array scheme in~\cite{gangli} when the two subarrays have the same number of antenna elements. This point will be detailed in Section~\ref{s4.1}.

\subsection{Common-Scaling Formulation and Noiseless Recovery}

We impose the common-scaling condition
\begin{equation}\label{eq:common_beta}
\beta_1=\beta_2=\beta.
\end{equation}
By the definition of $\beta_j$ in \eqref{eq:s2_betaj_def},
\[
\beta_j=\frac{R\lambda}{\alpha_j\det(\mathbf M_j)},
\]
this condition in \eqref{eq:common_beta} is satisfied if the two planar subarray generator matrices $\mathbf M_1$ and $\mathbf M_2$ are chosen such that
\[
\alpha_1\det(\mathbf M_1)=\alpha_2\det(\mathbf M_2),
\]
while the corresponding integer matrix moduli $\mathbf N_1$ and $\mathbf N_2$ in \eqref{eq:s2_Nj_def} are distinct.
Note that the detailed comparison with the 1D linear non-uniform subarray case proposed in \cite{gangli} will be given in next section.
Under the condition in \eqref{eq:common_beta}, the two subarrays share the same frequency vector, i.e.,
\begin{equation}\label{eq:common_f}
\mathbf f_1=\mathbf f_2=\mathbf f=\frac{1}{\beta}\mathbf g.
\end{equation}

We next show how the common-scaling design reduces the recovery problem to that of a single integer frequency vector. 
Since the 2D-DFT is evaluated on an integer frequency vector grid, we model the detected vector remainders by those of an integer approximation of $\mathbf f$, rather than by $\mathbf f$ itself.
We therefore define
\begin{equation}\label{eq:fint_def}
\mathbf f^{\mathrm{int}} \triangleq [\mathbf f]\in\mathbb Z^2,
\end{equation}
and introduce the associated quantization error
\begin{equation}\label{eq:e_def_special}
\mathbf e \triangleq \mathbf f^{\mathrm{int}}-\mathbf f,
\end{equation}
and thus,
\begin{equation}\label{eq:e_bound_special}
\|\mathbf e\|_\infty \le \frac{1}{2}.
\end{equation}
For each subarray $j$, $j=1,2$, the detected 2D-DFT peak yields the vector remainder $\mathbf r_j$ of the same integer vector $\mathbf f^{\mathrm{int}}$ modulo $\mathbf N_j$, i.e.,
\begin{equation}\label{eq:fint_congruence_special}
\mathbf f^{\mathrm{int}} \equiv \mathbf r_j \mod \mathbf N_j,
\qquad j=1,2,
\end{equation}
where $\mathbf r_j\in\mathcal N(\mathbf N_j)$.

Therefore, under the common-scaling condition in \eqref{eq:common_beta}, the two congruences in \eqref{eq:fint_congruence_special} involve the same integer vector $\mathbf f^{\mathrm{int}}$. 
We may thus first recover $\mathbf f^{\mathrm{int}}$ from these two congruences and then recover $\mathbf g$ through a simple scaling by $\beta$ as follows.

Under the common-scaling condition, the congruence system in \eqref{eq:fint_congruence_special} is an exact integer-valued congruence system for the common integer vector $\mathbf f^{\mathrm{int}}$, and the corresponding vector remainders are error-free. 

By the MD-CRT~\cite{MD1}, see also Proposition~1 in Part~I, $\mathbf f^{\mathrm{int}}$ can be uniquely recovered from \eqref{eq:fint_congruence_special} if and only if
\[
\mathbf f^{\mathrm{int}}\in\mathcal N(\mathbf R_f).
\]
Hence, $\mathcal N(\mathbf R_f)$ defines an unambiguous range for recovering $\mathbf f^{\mathrm{int}}$ in the common-scaling design.

Once $\mathbf f^{\mathrm{int}}$ is reconstructed from \eqref{eq:fint_congruence_special}, we estimate $\mathbf g$ by
\begin{equation}\label{eq:g_hat_from_fint}
\hat{\mathbf g}
\triangleq
\beta\,\mathbf f^{\mathrm{int}}.
\end{equation}
Since $\mathbf g=\beta\mathbf f$ and $\mathbf f^{\mathrm{int}}=\mathbf f+\mathbf e$, it follows that
\begin{equation}\label{eq:g_hat_error_special}
\hat{\mathbf g}-\mathbf g
=
\beta\,\mathbf e.
\end{equation}
Therefore, in the noiseless case, the reconstruction error satisfies
\begin{equation}\label{eq:g_error_bound_special}
\|\hat{\mathbf g}-\mathbf g\|_\infty
\le
\frac{|\beta|}{2}.
\end{equation}

The above discussion leads to the following noiseless recovery result.

\begin{thm}[Common-Scaling Noiseless Recovery]\label{thm:common_scaling_noiseless}
Assume that the common-scaling condition in \eqref{eq:common_beta} holds, and let
$\mathbf R_f=\operatorname{lcrm}(\mathbf N_1,\mathbf N_2)$.
If
$\mathbf f^{\mathrm{int}}\in\mathcal N(\mathbf R_f)$,
then $\mathbf f^{\mathrm{int}}$ can be uniquely recovered from the two congruences in \eqref{eq:fint_congruence_special}. Moreover, the estimate
$\hat{\mathbf g}=\beta\,\mathbf f^{\mathrm{int}}$
satisfies the reconstruction error bound in \eqref{eq:g_error_bound_special}.
\end{thm}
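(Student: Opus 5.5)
The proof splits into two independent parts: unique recovery of $\mathbf f^{\mathrm{int}}$ from the two congruences, and the error bound for $\hat{\mathbf g}$.

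\textbf{Step 1: the congruences are exact.} Under the common-scaling condition \eqref{eq:common_beta}, \eqref{eq:s2_fj_def} gives $\mathbf f_1=\mathbf f_2=\mathbf f$ as in \eqref{eq:common_f}. Hence both subarrays see the same integer vector $\mathbf f^{\mathrm{int}}=[\mathbf f]$, and the detected 2D-DFT peaks satisfy \eqref{eq:fint_congruence_special} with no remainder error. The system is therefore a genuine integer congruence system with integer moduli $\mathbf N_1,\mathbf N_2$. No rescaling by $k$ is needed, and there is no perturbation term $\boldsymbol\eta_j$.

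\textbf{Step 2: uniqueness via the MD-CRT.} I will invoke the MD-CRT (Proposition~1 in Part~I, \cite{MD1}) for the system $\mathbf y\equiv\mathbf r_j \bmod \mathbf N_j$, $j=1,2$. To make the uniqueness self-contained, suppose $\mathbf y,\mathbf y'\in\mathcal N(\mathbf R_f)$ both satisfy the system. Then $\mathbf y-\mathbf y'$ lies in $\mathcal L(\mathbf N_1)\cap\mathcal L(\mathbf N_2)=\mathcal L(\mathbf R_f)$, by the defining property of an lcrm. Since $\mathcal N(\mathbf R_f)$ is a complete residue system modulo $\mathbf R_f$, this forces $\mathbf y=\mathbf y'$.

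For existence, $\mathbf f^{\mathrm{int}}$ itself is a solution lying in $\mathcal N(\mathbf R_f)$ by hypothesis. It can be computed, for instance, by the constructive MD-CRT formula, and then reduced modulo $\mathbf R_f$ into $\mathcal N(\mathbf R_f)$. The reconstructed vector therefore equals $\mathbf f^{\mathrm{int}}$.

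\textbf{Step 3: the error bound.} With $\hat{\mathbf g}=\beta\mathbf f^{\mathrm{int}}$ and $\mathbf g=\beta\mathbf f$, we get $\hat{\mathbf g}-\mathbf g=\beta\mathbf e$, as in \eqref{eq:g_hat_error_special}. Then \eqref{eq:e_bound_special} yields \eqref{eq:g_error_bound_special} directly.

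The only delicate point is in Step 2: the identification $\mathcal L(\mathbf N_1)\cap\mathcal L(\mathbf N_2)=\mathcal L(\mathbf R_f)$ and the residue-system property of the FPD. Both are standard, so I will cite them from Part~I rather than reprove them. The remaining steps are direct substitutions.
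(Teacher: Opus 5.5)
Your proposal is correct and follows essentially the same route as the paper. Both arguments rest on the exactness of the common-frequency congruences \eqref{eq:fint_congruence_special}, uniqueness via the MD-CRT on $\mathcal N(\mathbf R_f)$, and the bound from $\hat{\mathbf g}-\mathbf g=\beta\mathbf e$ with $\|\mathbf e\|_\infty\le 1/2$; your self-contained uniqueness argument via $\mathcal L(\mathbf N_1)\cap\mathcal L(\mathbf N_2)=\mathcal L(\mathbf R_f)$ is a valid unpacking of the MD-CRT result that the paper simply cites.
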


We now compare the common-scaling design with the general two-subarray framework in Section~\ref{s2}. 
In the general formulation, the unambiguous range for $\mathbf g$ is determined by the FPD of
\[
\operatorname{lcrm}(\beta_1\mathbf N_1,\beta_2\mathbf N_2).
\]
Under the common-scaling condition $\beta_1=\beta_2=\beta$, this becomes
\begin{equation}\label{eq:lcrm_common_scaling}
\operatorname{lcrm}(\beta\mathbf N_1,\beta\mathbf N_2)
=
\beta\,\operatorname{lcrm}(\mathbf N_1,\mathbf N_2)
=
\beta\,\mathbf R_f.
\end{equation}
Hence, the unambiguous range for $\mathbf g$ is exactly the same as that in the general two-subarray framework.

The advantage of the common-scaling design is that the recovery becomes simpler. 
Instead of first converting the congruence system for $\mathbf g$ in \eqref{eq:s2_g_congruence_quantized} into the integer-valued form \eqref{eq:s2_Abj_def}--\eqref{eq:s2_x_def} and then recovering $[k\mathbf g]$ by robust MD-CRT, we directly recover the common integer vector $\mathbf f^{\mathrm{int}}$ from \eqref{eq:fint_congruence_special}, and then obtain $\mathbf g$ by the scaling $\hat{\mathbf g}=\beta\,\mathbf f^{\mathrm{int}}$ in \eqref{eq:g_hat_from_fint}. 
As a result, the noiseless error bound in \eqref{eq:g_error_bound_special} no longer contains the extra $\frac{1}{k}$ term appearing in \eqref{eq:s2_noiseless_bound}, and the sufficient condition \eqref{eq:s2_noiseless_condition} is no longer needed. 
The reason is that, in the common-scaling noiseless setting, recovery is based on the exact congruence system \eqref{eq:fint_congruence_special} for the common integer vector $\mathbf f^{\mathrm{int}}$, to which the classical MD-CRT applies directly, rather than applying robust MD-CRT to an erroneous congruence system for $[k\mathbf g]$. 
This is important in practice under platform size and minimum antenna spacing constraints, where satisfying \eqref{eq:s2_noiseless_condition} may require a relatively large number of antennas. 
Therefore, the common-scaling design preserves the same unambiguous range as the general framework while providing a simpler recovery procedure, weaker sufficient conditions for robust recovery, and tighter reconstruction error bounds, which together lead to better robustness.

\subsection{Noisy Recovery Under the Common-Scaling Design}

We next extend the common-scaling analysis to the noisy case, where the vector remainders detected from the two subarrays are perturbed by noise.
Specifically, let the detected vector remainders be
\begin{equation}\label{eq:r_tilde_special}
\tilde{\mathbf r}_j
=
\mathbf r_j+\Delta\mathbf r_j,
\qquad j=1,2,
\end{equation}
where $\Delta\mathbf r_j$ denotes the error vector in the $j$-th detected vector remainder caused by noise.
The true integer vector $\mathbf f^{\mathrm{int}}$ still satisfies the congruence system in \eqref{eq:fint_congruence_special}, but only the noisy observations $\tilde{\mathbf r}_1$ and $\tilde{\mathbf r}_2$ are available.
The task in the noisy case is therefore to reconstruct $\mathbf f^{\mathrm{int}}$ robustly from these erroneous vector remainders.

The noisy recovery problem can still be analyzed within the robust MD-CRT framework reviewed in Section~\ref{s2}. 
In the common-scaling design, robust reconstruction is performed directly for the integer vector $\mathbf f^{\mathrm{int}}$ from its two erroneous vector remainders modulo $\mathbf N_1$ and $\mathbf N_2$. 
In robust MD-CRT, the unambiguous range for $\mathbf f^{\mathrm{int}}$ can still be $\mathcal N(\mathbf R_f)$ when $\mathbf N_1$ and $\mathbf N_2$ are designed appropriately, as discussed in~\cite{mstage-mdcrt}, where
\[
\mathbf R_f=\operatorname{lcrm}(\mathbf N_1,\mathbf N_2).
\]
Accordingly, the unambiguous range for $\mathbf g$ remains the same as that in the noiseless common-scaling case.

Let $\mathbf G_f$ be a gcld of $\mathbf N_1$ and $\mathbf N_2$, and similar to \eqref{eq:s2_Lambda_def}, define
\begin{equation}\label{eq:Lambda_f_def}
\Lambda_f
\triangleq
\lambda_{\mathcal L(\mathbf G_f)},
\end{equation}
where $\lambda_{\mathcal L(\mathbf G_f)}$ is the minimum nonzero $\ell_\infty$-distance of the lattice generated by $\mathbf G_f$.

By the robust MD-CRT~\cite{MD2,mstage-mdcrt}, see also Proposition~2 in Part~I, $\mathbf f^{\mathrm{int}}$ can be robustly recovered from $\tilde{\mathbf r}_1$ and $\tilde{\mathbf r}_2$ if
\begin{equation}\label{eq:tau_f_condition}
\max_{j=1,2}\|\Delta\mathbf r_j\|_\infty < \frac{\Lambda_f}{4}.
\end{equation}
Let $\hat{\mathbf f}^{\mathrm{int}}$ denote the resulting estimate. Then
\begin{equation}\label{eq:fint_error_bound_special_local}
\|\hat{\mathbf f}^{\mathrm{int}}-\mathbf f^{\mathrm{int}}\|_\infty
\le
\max_{j=1,2}\|\Delta\mathbf r_j\|_\infty.
\end{equation}
We then estimate $\mathbf g$ by
\begin{equation}\label{eq:g_hat_noisy_special}
\hat{\mathbf g}
\triangleq
\beta\,\hat{\mathbf f}^{\mathrm{int}}.
\end{equation}
Since $\mathbf g=\beta\mathbf f$ and $\mathbf f^{\mathrm{int}}=\mathbf f+\mathbf e$, we have
\[
\hat{\mathbf g}-\mathbf g
=
\beta(\hat{\mathbf f}^{\mathrm{int}}-\mathbf f^{\mathrm{int}})
+
\beta\mathbf e.
\]
Using \eqref{eq:e_bound_special} and \eqref{eq:fint_error_bound_special_local}, we obtain
\begin{equation}\label{eq:g_error_bound_noisy_special}
\|\hat{\mathbf g}-\mathbf g\|_\infty
\le
|\beta|\max_{j=1,2}\|\Delta\mathbf r_j\|_\infty
+
\frac{|\beta|}{2}.
\end{equation}

The above discussion leads to the following noisy recovery result.

\begin{thm}[Common-Scaling Noisy Recovery]\label{thm:common_beta_noisy}
Assume that the common-scaling condition in \eqref{eq:common_beta} holds, and let $\mathbf R_f$ be an lcrm of $\mathbf N_1$ and $\mathbf N_2$. 
If
$\mathbf f^{\mathrm{int}}\in\mathcal N(\mathbf R_f)$
and \eqref{eq:tau_f_condition} holds,
then $\mathbf f^{\mathrm{int}}$ can be robustly recovered from the noisy vector remainders $\tilde{\mathbf r}_1$ and $\tilde{\mathbf r}_2$. 
Moreover, the estimate
$\hat{\mathbf g}=\beta\,\hat{\mathbf f}^{\mathrm{int}}$
satisfies the reconstruction error bound in \eqref{eq:g_error_bound_noisy_special}.
\end{thm}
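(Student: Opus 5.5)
The argument has two parts: first recover the common integer vector $\mathbf f^{\mathrm{int}}$ by the robust MD-CRT, then propagate the error through the scaling by $\beta$. Under the common-scaling condition \eqref{eq:common_beta}, both subarrays observe the same frequency vector $\mathbf f=\mathbf g/\beta$ by \eqref{eq:common_f}. Both detected peaks are therefore remainders of the single integer vector $\mathbf f^{\mathrm{int}}=[\mathbf f]$, as in \eqref{eq:fint_congruence_special}. The noisy observations $\tilde{\mathbf r}_j=\mathbf r_j+\Delta\mathbf r_j$ give an erroneous congruence system of exactly the form treated by robust MD-CRT: an integer unknown, integer nonsingular moduli $\mathbf N_1,\mathbf N_2$, and remainder errors $\Delta\mathbf r_j$. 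Unlike in Section~\ref{s2}, no rescaling by $k$ and no rounding of $k\mathbf g$ is needed, so no $\boldsymbol\delta$ or $k\beta_j\mathbf e_j$ terms enter the remainder errors.

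\emph{Step 1 (recovery of $\mathbf f^{\mathrm{int}}$).} I apply Proposition~2 of Part~I, i.e., the robust MD-CRT of \cite{MD2,mstage-mdcrt}, with moduli $\mathbf N_1,\mathbf N_2$, gcld $\mathbf G_f$, and lattice distance $\Lambda_f$ as in \eqref{eq:Lambda_f_def}. Its hypotheses are that the unknown lies in $\mathcal N(\mathbf R_f)$ and that the remainder errors are bounded by $\Lambda_f/4$; these are exactly the assumption $\mathbf f^{\mathrm{int}}\in\mathcal N(\mathbf R_f)$ and \eqref{eq:tau_f_condition}. The proposition then yields an estimate $\hat{\mathbf f}^{\mathrm{int}}$ satisfying \eqref{eq:fint_error_bound_special_local}. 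Its mechanism is the following. The differences $\tilde{\mathbf r}_1-\tilde{\mathbf r}_2$ are congruent to $\Delta\mathbf r_1-\Delta\mathbf r_2$ modulo $\mathcal L(\mathbf G_f)$. Since $\|\Delta\mathbf r_1-\Delta\mathbf r_2\|_\infty<\Lambda_f/2$, the folding integers $\mathbf n_j$ are determined correctly. A common remainder estimate, for instance $\tilde{\mathbf r}_1$ or an average, then carries error at most $\max_j\|\Delta\mathbf r_j\|_\infty$.

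\emph{Step 2 (error propagation to $\mathbf g$).} I define $\hat{\mathbf g}=\beta\hat{\mathbf f}^{\mathrm{int}}$ and decompose
$\hat{\mathbf g}-\mathbf g=\beta(\hat{\mathbf f}^{\mathrm{int}}-\mathbf f^{\mathrm{int}})+\beta(\mathbf f^{\mathrm{int}}-\mathbf f)$.
The triangle inequality, together with \eqref{eq:fint_error_bound_special_local} and \eqref{eq:e_bound_special}, gives \eqref{eq:g_error_bound_noisy_special}.

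The only real obstacle is Step 1: checking that the cited robust MD-CRT result applies with the full range $\mathcal N(\mathbf R_f)$ and with the error bound stated in the form $\max_j\|\Delta\mathbf r_j\|_\infty$. In general, the robust range can depend on the structure of $\mathbf N_1,\mathbf N_2$ (see \cite{mstage-mdcrt}). I would invoke Proposition~2 of Part~I directly, with moduli $\mathbf N_j$ and scaling $k=1$ so that no quantization terms appear. I would also note that it holds here under the stated hypotheses because $\mathbf N_1,\mathbf N_2$ are designed as discussed above. Step 2 is routine.
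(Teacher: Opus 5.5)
Your proposal is correct and follows essentially the same argument as the paper. The paper likewise applies the robust MD-CRT (Proposition~2 of Part~I) directly to the common integer vector $\mathbf f^{\mathrm{int}}$ with moduli $\mathbf N_1,\mathbf N_2$ under \eqref{eq:tau_f_condition} to obtain \eqref{eq:fint_error_bound_special_local}, then bounds $\hat{\mathbf g}-\mathbf g=\beta(\hat{\mathbf f}^{\mathrm{int}}-\mathbf f^{\mathrm{int}})+\beta\mathbf e$ by the triangle inequality using \eqref{eq:e_bound_special}, and it defers the same caveat you raise (that the robust range equals $\mathcal N(\mathbf R_f)$ only for appropriately designed $\mathbf N_1,\mathbf N_2$) to \cite{mstage-mdcrt}.
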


Compared with the results in \eqref{eq:s2_noisy_condition}-\eqref{eq:s2_noisy_bound} for the general two-subarray framework, the common-scaling design above remains simpler in the noisy case. 
For the same unambiguous range, we directly recover $\mathbf f^{\mathrm{int}}$ without introducing the additional scaling step \eqref{eq:s2_Abj_def}--\eqref{eq:s2_x_def} used in Section~\ref{s2} to cast $\mathbf g$ into integer-valued form. 
As a result, the sufficient condition for robust recovery no longer contains the first term in \eqref{eq:s2_noisy_condition}, which arises from rounding the scaled parameter vector $\mathbf g$, and the second term, which arises from the quantization of the frequency vector $\mathbf f$. 
Therefore, under the same unambiguous range, the common-scaling design admits a weaker robust recovery condition and thus leads to better robust recovery performance.
The reconstruction error bound in \eqref{eq:g_error_bound_noisy_special} is also tighter, since it does not contain the extra $\frac{1}{k}$ term in \eqref{eq:s2_noisy_bound}. 
Therefore, under the same unambiguous range, the common-scaling design provides a tighter reconstruction error bound.

In summary, the common-scaling design preserves the same unambiguous range as the general two-subarray framework, while leading to weaker robust recovery conditions and tighter reconstruction error bounds in both noiseless and noisy settings. 
As a result, it achieves better robustness under the same unambiguous range.

\section{Performance Comparisons}\label{s4}

In this section, we compare the recovery performance of the proposed planar array framework in the noiseless setting under the same physical constraints. 
We first compare it with a conventional separated scheme under the same platform size and minimum antenna spacing constraints. 
We then compare four representative planar array designs under the same type of constraints to examine how recovery performance depends on both the number of antennas and the array geometry. 
These theoretical comparisons also provide the basis for interpreting the noisy numerical results in Section~\ref{s5}.

\subsection{Comparison with Separated Processing}\label{s4.1}

In this subsection, we compare the proposed planar array framework with a conventional separated scheme based on the existing one-dimensional treatments of the two parameters. 
In the separated scheme, the cross-range shift is recovered by the CRT-based method using two non-uniform linear subarrays in~\cite{gangli}, while the target height is estimated from cross-track interferometric phase as in~\cite{xiaoweili}, which can be thought of as a two-point DFT in the vertical direction. 
The comparison is restricted to the noiseless setting, where no additive noise is considered, as what was considered in the compared one in \cite{gangli}.
To ensure a meaningful comparison, the proposed planar array framework and the separated scheme are both under the same platform size and minimum antenna spacing constraints. 

For the cross-range shift, this allows a direct comparison with the one-dimensional two-subarray design in~\cite{gangli}, including the unambiguous range, the sufficient recovery condition under quantization, and the corresponding reconstruction error bound. 
For the height parameter, however, a complete alignment is not possible, since the separated scheme uses only one additional cross-track receiver rather than a planar two-subarray structure. 
Accordingly, for the height parameter, we compare only the unambiguous range under the same minimum antenna spacing constraint in this subsection.

Since the proposed planar array framework and the separated scheme have the same unambiguous range under the imposed platform size and minimum antenna spacing constraints, the purpose of this comparison is not to enlarge the range itself. 
Instead, it is to highlight a structural advantage of the planar array construction, i.e., its ability to realize a common-scaling design.

To see it clearly, we first recall the relevant one-dimensional formulation in~\cite{gangli}. 
Consider two one-dimensional subarrays, each consisting of $M$ antenna elements, with inter-element spacings $d_1$ and $d_2$, respectively. 
After some standard radar signal processing, the signal in the $i$-th subarray can be written as
\begin{equation}\label{eq:1d_signal_common_scaling}
S_i(m)
=
\exp\!\left(j2\pi \frac{F_i}{M}m\right)\delta(n-n_0-\Delta_{\rm shift}),
\end{equation}
for $m=0,1,\dots,M-1$, where
\begin{equation}\label{eq:1d_Fi_def}
F_i
\triangleq
\frac{M\Delta_x d_i}{R_0\lambda}\,\Delta_{\rm shift},
\qquad i=1,2.
\end{equation}
Hence, in the one-dimensional setting, the DFT modulus is the scalar $M$, and the corresponding scaling factor relating $\Delta_{\rm shift}$ and $F_i$ is
\begin{equation}\label{eq:1d_beta_def}
\beta_i^{(1\mathrm D)}
\triangleq
\frac{\Delta_{\rm shift}}{F_i}
=
\frac{R_0\lambda}{M\Delta_x d_i},
\qquad i=1,2.
\end{equation}
The following proposition shows that, when the two one-dimensional subarrays have the same number of antenna elements, the common-scaling design is impossible.

\begin{prop}\label{prop:1d_no_common_scaling}
Consider the one-dimensional non-uniform linear array scheme in~\cite{gangli}, and assume that the two subarrays have the same number $M$ of antenna elements. 
Then the common-scaling design, $\beta_1^{(1\mathrm D)}=\beta_2^{(1\mathrm D)}$,
is possible only if $d_1=d_2$.
Hence, such a construction is not possible in the considered one-dimensional two-subarray setting.
\end{prop}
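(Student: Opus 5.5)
The plan is to compute directly from the definition of the one-dimensional scaling factor in \eqref{eq:1d_beta_def}. Since both subarrays have the same number $M$ of antenna elements, the two factors share the same constants $R_0$, $\lambda$, $M$ and $\Delta_x$, and differ only through the spacing $d_i$. We will therefore set $\beta_1^{(1\mathrm D)}=\beta_2^{(1\mathrm D)}$ and show that this forces $d_1=d_2$.

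Concretely, the equality reads
\[
\frac{R_0\lambda}{M\Delta_x d_1}=\frac{R_0\lambda}{M\Delta_x d_2}.
\]
The reference range $R_0$, the wavelength $\lambda$, the number of antennas $M$, the cross-range resolution $\Delta_x$ and the spacings $d_i$ are all positive physical quantities. Cross-multiplying is therefore legitimate and gives $d_1=d_2$. Equivalently, by \eqref{eq:1d_Fi_def} the map $d_i\mapsto F_i$ is injective for a fixed nonzero $\Delta_{\rm shift}$, so equal scaling factors mean equal folded frequencies.

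It remains to explain why $d_1=d_2$ rules out the construction. With $d_1=d_2$ and the same $M$, the two subarrays are identical uniform linear arrays with the same scalar DFT modulus $M$, so they produce the same remainder $F_1=F_2$ modulo $M$. The congruence system then collapses to a single congruence, whose lcm is $\operatorname{lcm}(M,M)=M$, and the unambiguous range is not enlarged. This violates the defining requirement of the common-scaling design, namely distinct moduli with a common scaling factor (the analogue of distinct $\mathbf N_1,\mathbf N_2$ in Section~\ref{s3}). Hence the design is impossible.

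There is essentially no technical obstacle; the algebra is a single step. The only point needing care is the interpretive one: the statement's conclusion rests on noting that in one dimension the modulus is the fixed scalar $M$, so there is no analogue of $\alpha_j$ or of the matrix $\mathbf N_j$ to compensate for different spacings. In the planar case, different $\mathbf M_j$ can share $\alpha_j\det(\mathbf M_j)$ while still giving distinct $\mathbf N_j=\alpha_j\operatorname{adj}(\mathbf M_j^\top)$; in one dimension no such freedom exists.
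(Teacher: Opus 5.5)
Your proposal is correct and follows essentially the same route as the paper's proof. You equate the two scaling factors from \eqref{eq:1d_beta_def}, cancel the common nonzero constants to get $d_1=d_2$, and then note that identical subarrays make the two-subarray design degenerate to a single subarray; your $\operatorname{lcm}(M,M)=M$ remark simply makes that degeneration explicit.
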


\begin{proof}
By \eqref{eq:1d_beta_def}, if
\[
\beta_1^{(1\mathrm D)}=\beta_2^{(1\mathrm D)},
\]
then
\[
\frac{R_0\lambda}{M\Delta_x d_1}
=
\frac{R_0\lambda}{M\Delta_x d_2},
\]
which implies
\[
d_1=d_2.
\]
Therefore, two one-dimensional subarrays with the same number of antenna elements can share the same scaling factor only when their inter-element spacings are equal. 
In that case, the two subarrays are identical, so the supposed two-subarray design degenerates into a single-subarray one. 
Hence, the common-scaling design is unavailable in the one-dimensional setting considered here.
\end{proof}

We now explain why the planar array setting considered in this paper is different. 
For the $j$-th planar subarray, the scaling factor is
\[
\beta_j=\frac{R\lambda}{\alpha_j\det(\mathbf M_j)},
\]
and the number of antenna elements is
\[
|\det(\mathbf N_j)|
=
\alpha_j^2|\det(\mathbf M_j)|.
\]
Suppose that two planar subarrays are required to have the same number of antenna elements and also satisfy the common-scaling condition $\beta_1=\beta_2$. 
Then
\[
\alpha_1^2|\det(\mathbf M_1)|
=
\alpha_2^2|\det(\mathbf M_2)|
\]
and
\[
\alpha_1\det(\mathbf M_1)=\alpha_2\det(\mathbf M_2).
\]
These two relations imply $\alpha_1=\alpha_2$, and hence
\[
\det(\mathbf M_1)=\det(\mathbf M_2).
\]
Therefore, unlike the one-dimensional case, two planar subarrays with the same number of antenna elements may still be constructed from two distinct generator matrices with the same determinant. 
These two subarrays then share the same scaling factor, while their associated matrix moduli $\mathbf{N}_j$ defined in \eqref{eq:s2_Nj_def},
\[
\mathbf N_j=\alpha_j\operatorname{adj}(\mathbf M_j^\top),
\qquad j=1,2,
\]
can still be different. 

To illustrate this additional freedom, consider the integer matrix case. 
For a fixed determinant, one would like to choose two co-prime integer matrices in order to obtain a larger lcrm in the absolute determinant value sense. 
Various constructions of pairwise co-prime integer matrices with the same absolute determinant value have been studied in~\cite{guo25,primematrix} for non-prime and prime integer matrices, respectively. 
Interestingly, for $2\times 2$ co-prime integer matrices with the same absolute determinant value, two matrices are already sufficient to attain the maximal lcrm in the absolute determinant value sense~\cite{mstage-mdcrt}. 
Since a larger lcrm implies a larger unambiguous range, adding a third such matrix does not further increase the size of the unambiguous range.  
This provides an additional justification for focusing on the two-subarray case in the common-scaling design considered in this Part II. 
On the other hand, the general $J$-subarray framework in Part~I remains essential for more general multi-subarray designs, and the present two-subarray common-scaling design is only one particular design within the general framework.
How to construct an optimal planar array under practical physical constraints within the general framework remains an open problem and will be investigated in our future work.

Therefore, the planar array framework can realize a common-scaling pair, and this structural advantage removes the extra condition required in the one-dimensional scheme for robust recovery under quantization.

To make the above structural difference concrete, we next compare a representative common-scaling planar array design with a representative separated one-dimensional scheme. 
The two examples are chosen to satisfy the same platform size and minimum antenna spacing constraints, so that they have the same unambiguous range. 
The comparison then focuses on whether the proposed planar array framework yields a weaker recovery condition and a tighter reconstruction error bound, while the corresponding noisy case performance will be evaluated in Section~\ref{s5}.

\begin{figure*}[htbp]
    \centering
    \begin{subfigure}[t]{0.37\linewidth}
        \centering
        \includegraphics[width=\linewidth]{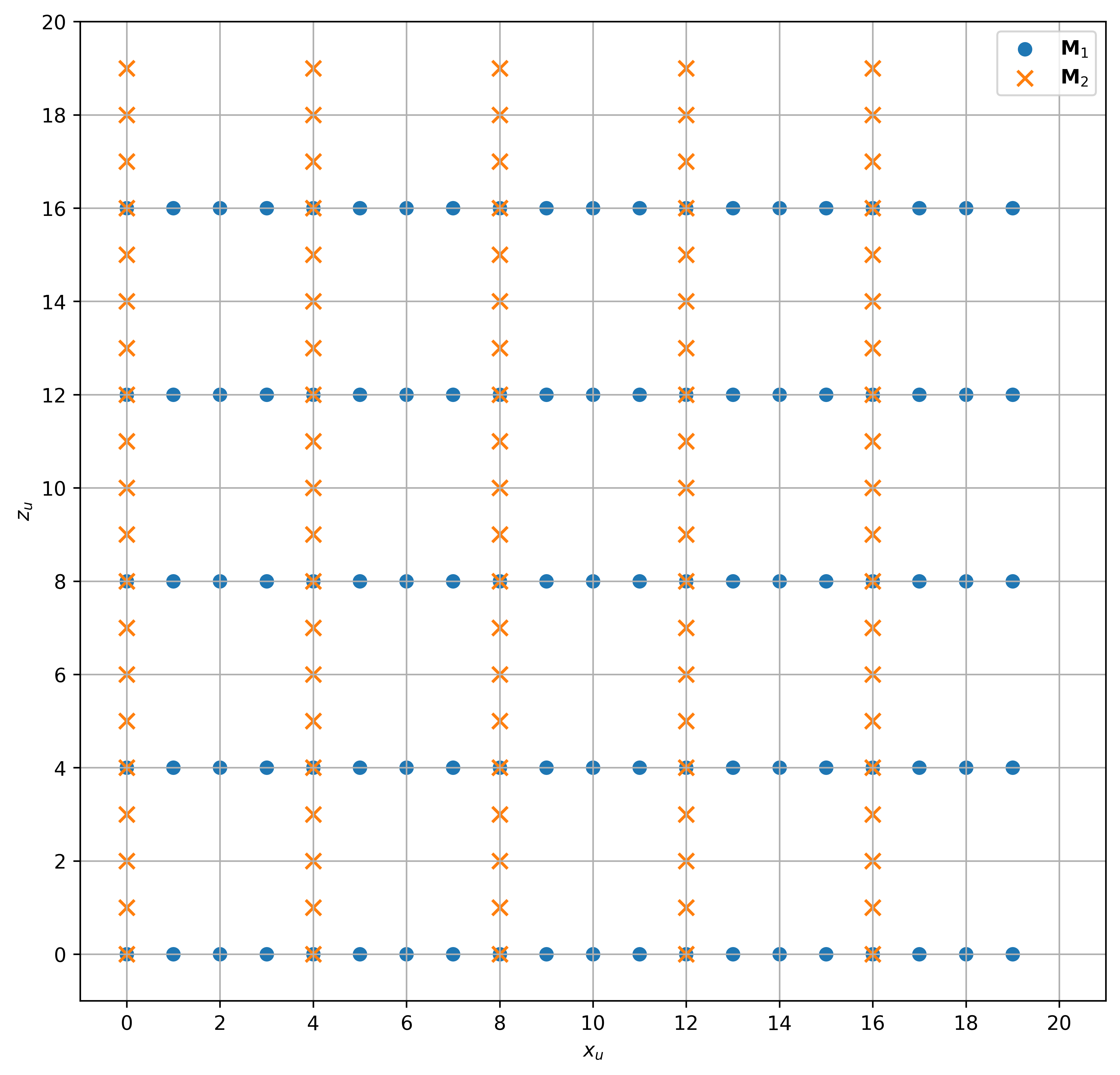}
        \caption{Common-scaling planar array design.}
        \label{fig:common_scaling_array}
    \end{subfigure}
    \begin{subfigure}[t]{0.37\linewidth}
        \centering
        \includegraphics[width=\linewidth]{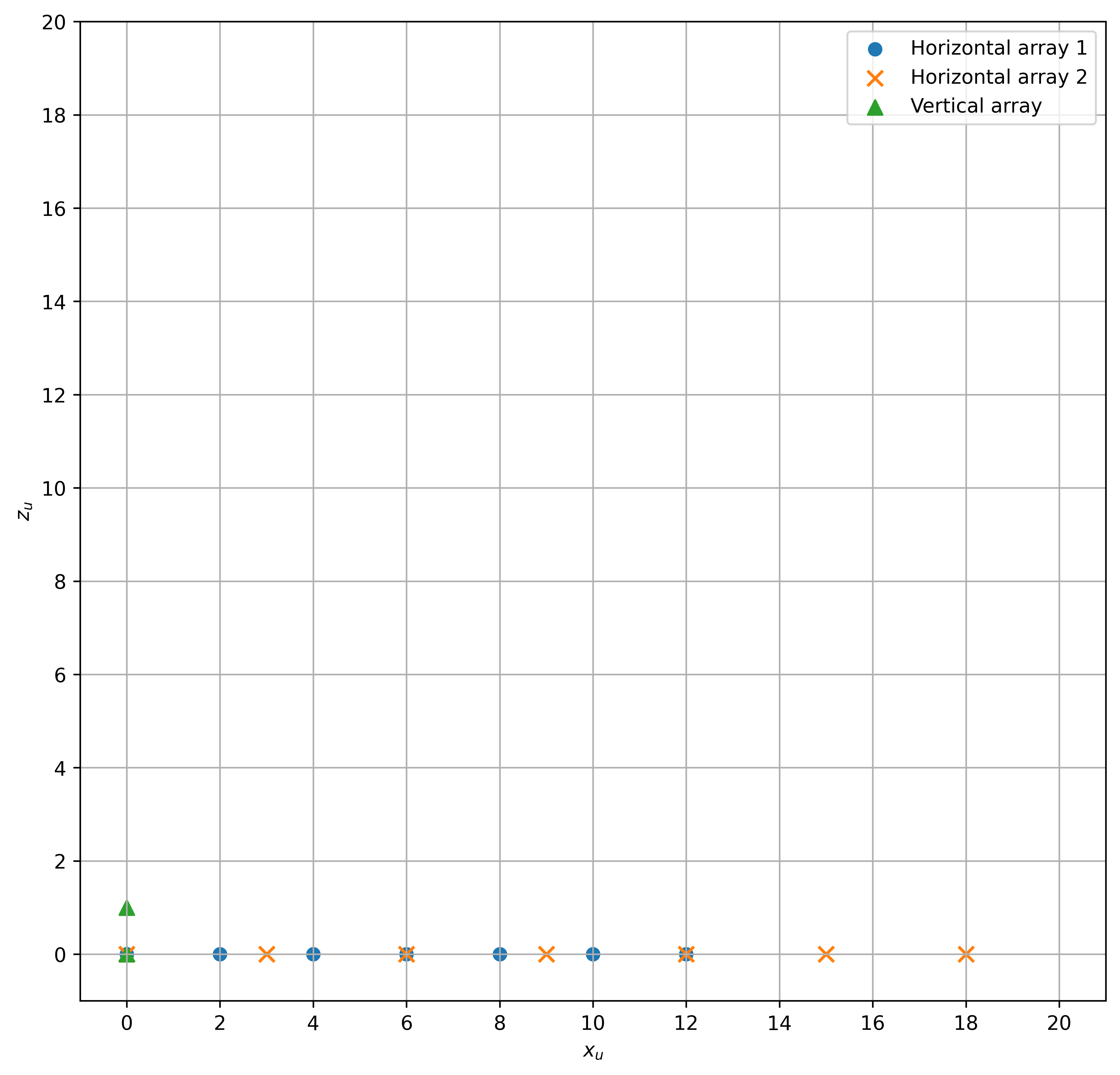}
        \caption{Separated one-dimensional design.}
        \label{fig:separated_1d_array}
    \end{subfigure}
    \caption{Representative array designs in Examples~\ref{ex:common_scaling_planar} and \ref{ex:separated_1d_baseline}.}
    \label{fig:2D_vs_sep1D_arrays}
\end{figure*}

\begin{example}\label{ex:common_scaling_planar}
We first consider the common-scaling planar array shown in Fig.~\ref{fig:2D_vs_sep1D_arrays}(a), whose elements lie in a $19\times 19$ platform in the $x$--$z$ plane with minimum antenna spacing $1$.
Two subarrays are constructed by choosing
\[
\mathbf M_1=
\begin{pmatrix}
1 & 0\\
0 & 4
\end{pmatrix},
\qquad
\mathbf M_2=
\begin{pmatrix}
4 & 0\\
0 & 1
\end{pmatrix},
\]
with $\alpha_1=\alpha_2=5$. Since $\det(\mathbf M_1)=\det(\mathbf M_2)=4$, the two subarrays have the same scaling factor
\begin{equation}\label{eq:planar_beta_example}
\beta_1=\beta_2=\beta=\frac{R\lambda}{20}.
\end{equation}
By \eqref{eq:s2_Nj_def}, the corresponding integer matrix moduli are
\[
\mathbf N_1=
\begin{pmatrix}
20 & 0\\
0 & 5
\end{pmatrix},
\qquad
\mathbf N_2=
\begin{pmatrix}
5 & 0\\
0 & 20
\end{pmatrix}.
\]

Let
\begin{equation}\label{eq:planar_Rf_example}
\mathbf R_f=\operatorname{lcrm}(\mathbf N_1,\mathbf N_2)=
\begin{pmatrix}
-20 & 0\\
0 & 20
\end{pmatrix}.
\end{equation}
Hence, the unambiguous range for $\mathbf f^{\mathrm{int}}$ is
\[
\mathcal N(\mathbf R_f)
=
\left\{
\begin{bmatrix}
-k_1\\
k_2
\end{bmatrix}
:\;
k_1,k_2\in\{0,1,\dots,19\}
\right\}.
\]
Since $\mathbf g=\beta\mathbf f$, the corresponding unambiguous range for $\mathbf g$ is
\begin{equation}\label{eq:planar_range_example}
\Delta_{\rm shift}\Delta_x\in[0,R\lambda),
\qquad
z_0\in[0,R\lambda).
\end{equation}

By Theorem~\ref{thm:common_scaling_noiseless}, in the noiseless case, the reconstruction error satisfies
\begin{equation}\label{eq:planar_g_error_example}
\|\hat{\mathbf g}-\mathbf g\|_\infty\le \frac{\beta}{2}=\frac{R\lambda}{40}.
\end{equation}
Equivalently,
\begin{equation}\label{eq:planar_param_error_example}
|\hat{\Delta}_{\rm shift}-\Delta_{\rm shift}|
\le
\frac{R\lambda}{40\Delta_x},
\qquad
|\hat z_0-z_0|
\le
\frac{R\lambda}{40}.
\end{equation}
\end{example}

\begin{example}\label{ex:separated_1d_baseline}
We next consider the separated one-dimensional design shown in Fig.~\ref{fig:2D_vs_sep1D_arrays}(b). Its elements also lie in a $19\times 19$ platform in the $x$--$z$ plane, and the minimum antenna spacing is $1$.

For the cross-range shift, we use the horizontal two-subarray construction in \cite{gangli}. The two one-dimensional subarrays both  have $7$ elements and spacings $d_1=2$ and $d_2=3$. The corresponding scalar moduli are
\begin{equation}\label{eq:sep_moduli_example}
L_1=\frac{R\lambda}{\Delta_x d_1}=\frac{R\lambda}{2\Delta_x},
\qquad
L_2=\frac{R\lambda}{\Delta_x d_2}=\frac{R\lambda}{3\Delta_x}.
\end{equation}
Hence,
\begin{equation}\label{eq:sep_shift_range_example}
\Delta_{\rm shift}\Delta_x\in\left[0,\Delta_x\operatorname{LCM}(L_1,L_2)\right)
=
[0,R\lambda).
\end{equation}
According to \cite{gangli}, robust recovery of $\Delta_{\rm shift}$ under DFT frequency grid quantization further requires
\begin{equation}\label{eq:sep_condition_example}
N_{\rm DFT}>L_1+L_2=\frac{5R\lambda}{6\Delta_x},
\end{equation}
where $N_{\rm DFT}$ denotes the DFT size used for each linear subarray after zero-padding, and without zero-padding, it reduces to the number of antennas, i.e., $7$, in each one-dimensional subarray. In this case, the reconstruction error satisfies
\begin{equation}\label{eq:sep_shift_error_example}
|\hat{\Delta}_{\rm shift}-\Delta_{\rm shift}|
\le
\frac{5R\lambda}{24N_{\rm DFT}\Delta_x}.
\end{equation}

For the height direction, we adopt the single cross-track receiver model in \cite{xiaoweili}. The vertical spacing is $1$. Hence,
\begin{equation}\label{eq:sep_height_range_example}
z_0\in[0,R\lambda).
\end{equation}
\end{example}

We now compare Examples~\ref{ex:common_scaling_planar} and \ref{ex:separated_1d_baseline}. First, the two schemes have the same unambiguous range. Indeed, by \eqref{eq:planar_range_example}, the proposed planar array design yields
\[
\Delta_{\rm shift}\Delta_x\in[0,R\lambda),
\qquad
z_0\in[0,R\lambda),
\]
and by \eqref{eq:sep_shift_range_example} and \eqref{eq:sep_height_range_example}, the separated one-dimensional scheme yields exactly the same range. Under the present platform size and the minimum antenna spacing constraints, the comparison is therefore not about enlarging the unambiguous range.

The essential difference appears in the robust recovery requirement for the cross-range shift under quantization. 
In the separated one-dimensional scheme, robust recovery requires the additional condition \eqref{eq:sep_condition_example}. 
In contrast, for the proposed planar array design, the common-scaling pair is constructed directly within the planar array model, so Theorem~\ref{thm:common_scaling_noiseless} applies without introducing any extra condition of this kind. 
Therefore, under the same unambiguous range, the proposed planar array design admits a weaker robust recovery requirement, so robust recovery is easier to achieve.

A similar difference appears in the reconstruction error bounds. For the proposed planar array design, \eqref{eq:planar_param_error_example} gives bounds for both cross-range shift and height errors. For the separated one-dimensional scheme, \eqref{eq:sep_shift_error_example} gives a bound only for the shift error. Moreover, in this example, the shift reconstruction error bound of the planar array design is smaller when zero padding is excluded from the separated one-dimensional array design, i.e., $N_{\rm DFT}=7$, because zero padding is a post-processing operation rather than a structural advantage of the one-dimensional construction and can also be applied to the two-dimensional DFT in the planar scheme, as discussed in Part I.

In summary, the two schemes have the identical unambiguous range under the same constraints, but they differ in recovery structure. 
The proposed planar array framework realizes the common-scaling design, so it does not require the additional condition needed in the separated one-dimensional scheme for robust recovery. 
As a result, under the same unambiguous range, robust recovery is easier to achieve, and the shift reconstruction error bound is smaller in this example. 
These features also help explain the better robustness observed for the proposed framework in the noisy case simulations of Section~\ref{s5}.

\subsection{Comparison of Planar Array Designs}\label{s4.2}

In this subsection, we compare four representative array designs under the same physical constraints: a non-separable planar array design formed by two subarrays, a uniform planar array design, a separable planar array design formed by two subarrays, and a pair of one-dimensional arrays with one along the horizontal direction and the other along the vertical direction.

All the four array designs can be viewed as subarrays selected from the same dense planar array that occupies a $95\times95$ planar region with unit spacing. Hence, the dense array contains $96^2=9216$ elements. Each of the four designs satisfies the same minimum antenna spacing requirement of $5$.
They are compared here as four different array designs under the same physical constraints. 
The purpose of this comparison is to show that recovery performance depends not only on the number of antennas but also on the geometry of the selected antenna set. 
Here, the term ``separable'' refers to the array geometry, namely, that the corresponding generator matrix is diagonal, and should not be confused with the ``separated'' processing scheme in Section~\ref{s4.1}. 
As in the previous subsection, we restrict the analysis to the noiseless setting and focus on the unambiguous range and the corresponding reconstruction error bound. 
The noisy case will be examined later in Section~\ref{s5}.

The four representative array designs are shown in Fig.~\ref{fig:array_comparison}, and their details are given below.

\begin{figure*}[htbp]
    \centering

    \begin{subfigure}[t]{0.24\textwidth}
        \centering
        \includegraphics[width=\linewidth]{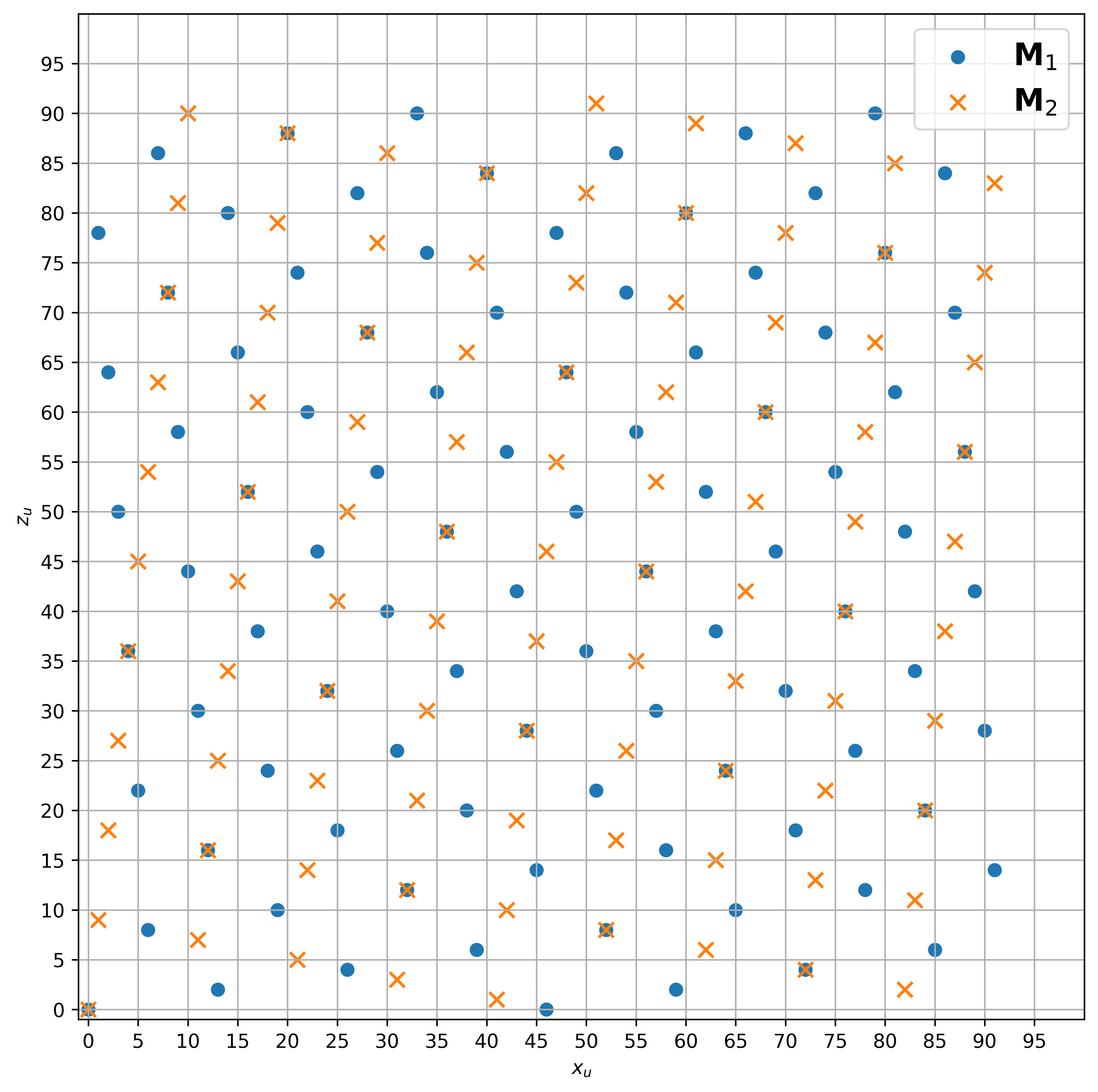}
        \caption{Non-separable planar array design (161 elements).}
        \label{fig:array_a}
    \end{subfigure}
    \begin{subfigure}[t]{0.24\textwidth}
        \centering
        \includegraphics[width=\linewidth]{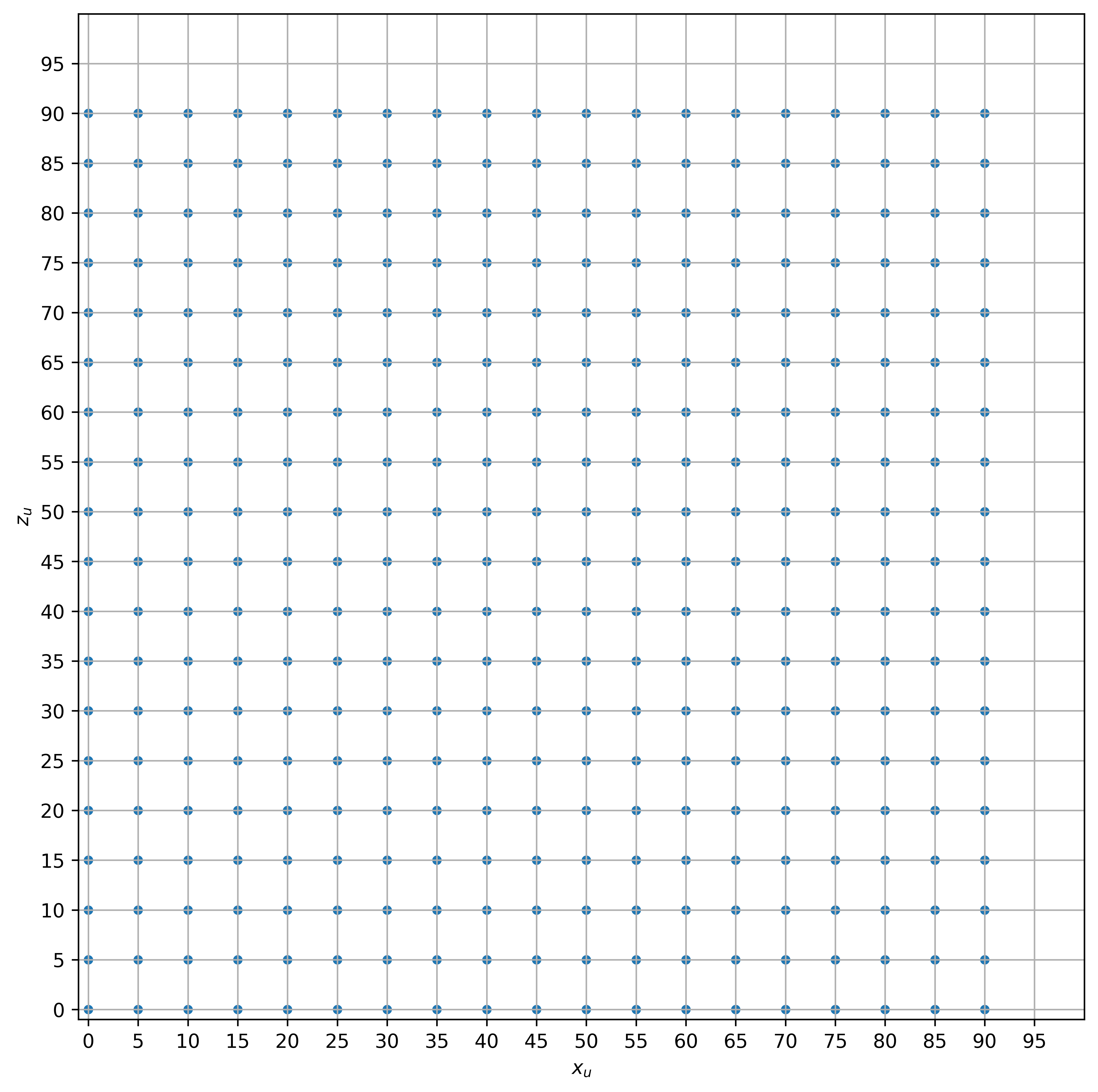}
        \caption{Uniform planar array design (361 elements).}
        \label{fig:array_b}
    \end{subfigure}
    \begin{subfigure}[t]{0.24\textwidth}
        \centering
        \includegraphics[width=\linewidth]{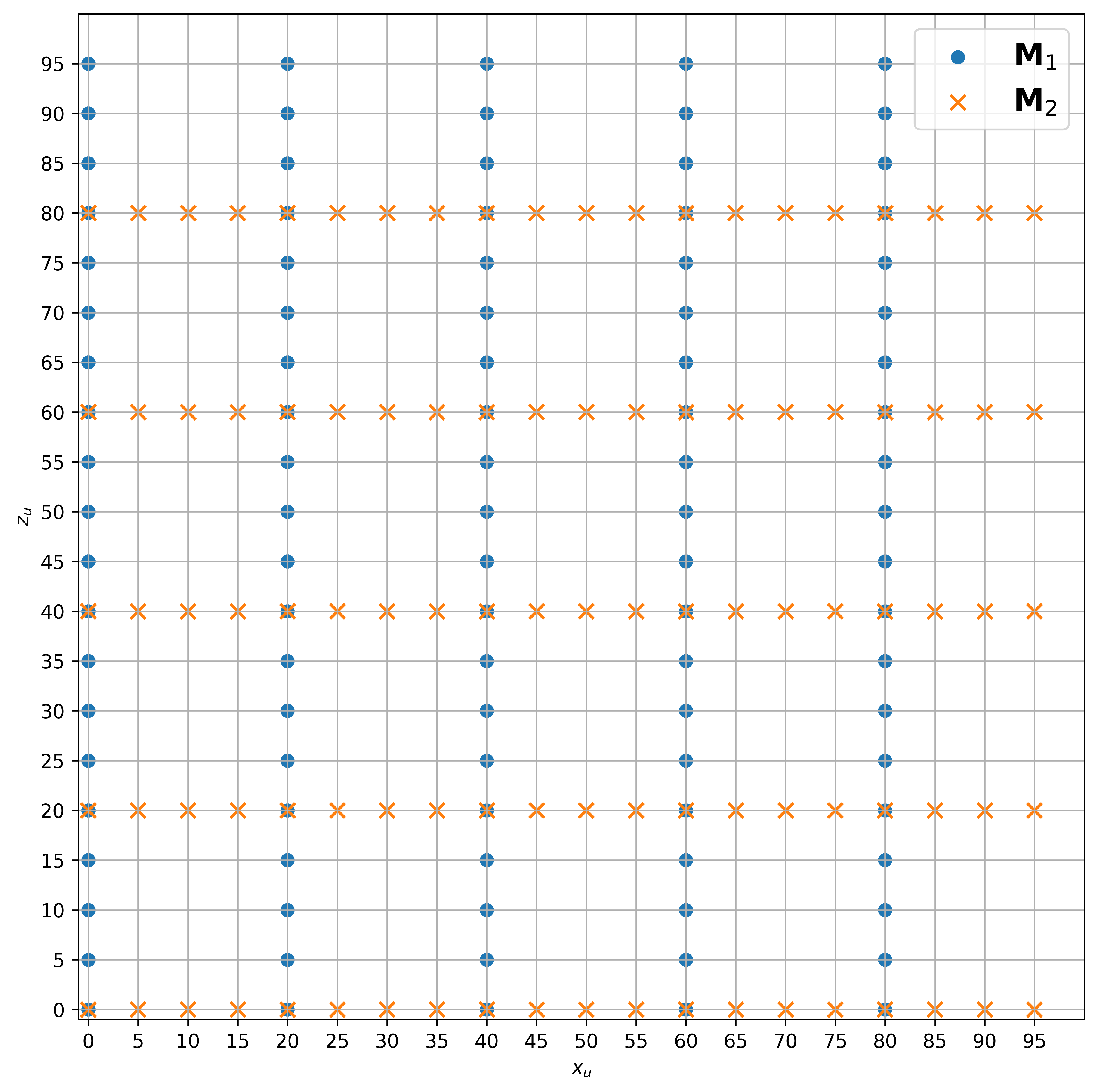}
        \caption{Separable planar array design (175 elements).}
        \label{fig:array_c}
    \end{subfigure}
    \begin{subfigure}[t]{0.24\textwidth}
        \centering
        \includegraphics[width=\linewidth]{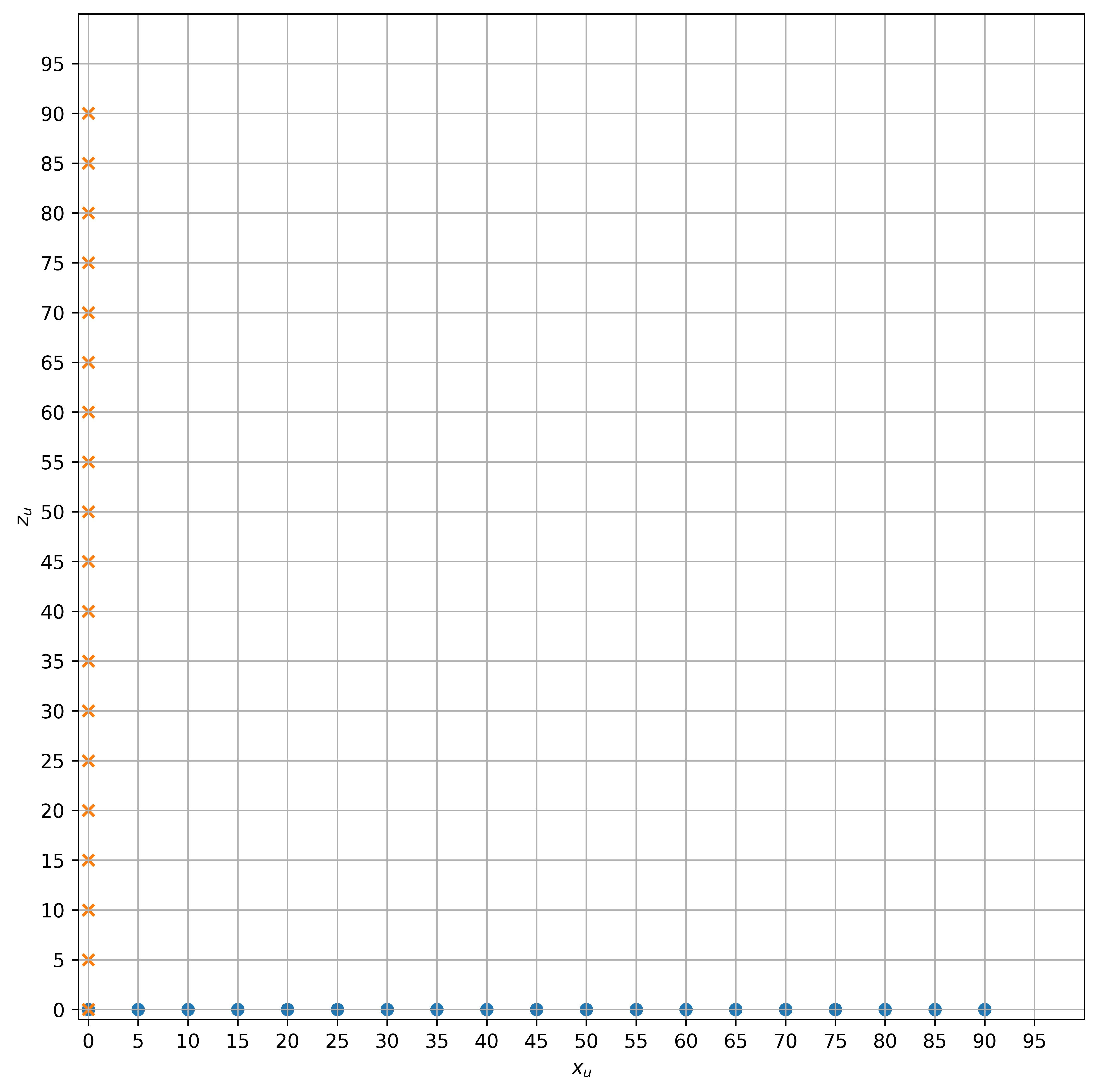}
        \caption{Pair of one-dimensional arrays (37 elements).}
        \label{fig:array_d}
    \end{subfigure}

    \caption{Four array designs under the same $95\times95$ planar platform and the same minimum antenna spacing of $5$.}
    \label{fig:array_comparison}
\end{figure*}

\begin{example}[Non-separable planar array design]\label{ex:nonseparable_planar}
We first consider the non-separable planar array design shown in Fig.~\ref{fig:array_comparison}(a). 
Its elements lie in a $95\times95$ platform and occupy a $91\times91$ region. 
The minimum antenna spacing is $5$. The array is constructed by choosing
\[
\mathbf M_1=
\begin{pmatrix}
46 & -33\\
0 & 2
\end{pmatrix},
\qquad
\mathbf M_2=
\begin{pmatrix}
51 & -10\\
-1 & 2
\end{pmatrix},
\]
with $\alpha_1=\alpha_2=1$. Since $\det(\mathbf M_1)=\det(\mathbf M_2)=92$, the two subarrays have the same scaling factor
\begin{equation}\label{eq:nsep_beta_example}
\beta_1=\beta_2=\beta=\frac{R\lambda}{92}.
\end{equation}
By \eqref{eq:s2_Nj_def}, the corresponding integer matrix moduli are, respectively,
\[
\mathbf N_1=
\begin{pmatrix}
2 & 0\\
33 & 46
\end{pmatrix},
\qquad
\mathbf N_2=
\begin{pmatrix}
2 & 1\\
10 & 51
\end{pmatrix}.
\]
The resulting array contains $161$ elements.

Let
\begin{equation}\label{eq:nsep_R_example}
\mathbf R=\operatorname{lcrm}(\mathbf N_1,\mathbf N_2)
=
4
\begin{pmatrix}
1 & 0\\
5 & 23
\end{pmatrix}
\mathbf U,
\end{equation}
where $\mathbf U$ is unimodular. Hence, according to \eqref{eq:lcrm_common_scaling}, the unambiguous range of $\mathbf g$ is
\begin{equation}\label{eq:nsep_range_example}
\mathcal N(\beta\mathbf R)
=
\mathcal N\!\left(
\frac{R\lambda}{23}
\begin{pmatrix}
1 & 0\\
5 & 23
\end{pmatrix}
\mathbf U
\right).
\end{equation}

By Theorem~\ref{thm:common_scaling_noiseless}, in the noiseless case,
\begin{equation}\label{eq:nsep_error_example}
\|\hat{\mathbf g}-\mathbf g\|_\infty
\le
\frac{\beta}{2}
=
\frac{R\lambda}{184}.
\end{equation}
\end{example}

\begin{example}[Uniform planar array design]\label{ex:dense_planar}
We next consider the uniform planar array design shown in Fig.~\ref{fig:array_comparison}(b). 
It lies in a $95\times95$ platform, occupies a $90\times90$ region, and has minimum antenna spacing $5$. 
This example is chosen for comparison with the previous non-separable design, which occupies a $91\times91$ planar region. 
Under a comparable platform size and the same minimum antenna spacing constraint, it gives the densest square-lattice arrangement.
The array is constructed by choosing
\[
\mathbf M=5\mathbf{I},
\]
with $\alpha=\frac{19}{5}$. By \eqref{eq:s2_Nj_def}, the corresponding matrix modulus is
\[
\mathbf N=19\mathbf I.
\]
The resulting array contains $361$ elements, much more than the previous two non-separable subarray design.

Since only one modulus is involved, the unambiguous range of $\mathbf g$ is directly given by
\begin{equation}\label{eq:dense_range_example}
\mathcal N(\beta \mathbf N)
=
\mathcal N\!\left(
\frac{R\lambda}{5}\mathbf I
\right),
\qquad
\beta=\frac{R\lambda}{\alpha\det(\mathbf M)}=\frac{R\lambda}{95}.
\end{equation}

In the noiseless case,
\begin{equation}\label{eq:dense_error_example}
\|\hat{\mathbf g}-\mathbf g\|_\infty
\le
\frac{\beta}{2}
=
\frac{R\lambda}{190}.
\end{equation}
\end{example}

\begin{example}[Separable planar array design]\label{ex:separable_planar}
We next consider the separable planar array design shown in Fig.~\ref{fig:array_comparison}(c). It occupies a $95\times95$ platform and has minimum antenna spacing $5$. Compared with the non-separable design, its occupied region is slightly larger in order to satisfy both the common-scaling setting and the minimum antenna spacing constraint. The array is constructed by choosing two diagonal integer matrices
\[
\mathbf M_1 =
\begin{pmatrix}
5 & 0\\
0 & 20
\end{pmatrix},
\qquad
\mathbf M_2 =
\begin{pmatrix}
20 & 0\\
0 & 5
\end{pmatrix},
\]
with $\alpha_1=\alpha_2=1$. Since $\det(\mathbf M_1)=\det(\mathbf M_2)=100$, the two subarrays have the same scaling factor
\begin{equation}\label{eq:sep2d_beta_example}
\beta_1=\beta_2=\beta=\frac{R\lambda}{100}.
\end{equation}
By \eqref{eq:s2_Nj_def}, the corresponding matrix moduli are
\[
\mathbf N_1 =
\begin{pmatrix}
20 & 0\\
0 & 5
\end{pmatrix},
\qquad
\mathbf N_2 =
\begin{pmatrix}
5 & 0\\
0 & 20
\end{pmatrix}.
\]
The resulting array contains $175$ elements.

Let
\begin{equation}\label{eq:sep2d_R_example}
\mathbf R=\operatorname{lcrm}(\mathbf N_1,\mathbf N_2)=20\mathbf I.
\end{equation}
Hence, according to \eqref{eq:lcrm_common_scaling}, the unambiguous range of $\mathbf g$ is
\begin{equation}\label{eq:sep2d_range_example}
\mathcal N(\beta\mathbf R)=\mathcal N\!\left(\frac{R\lambda}{5}\mathbf I\right).
\end{equation}

By Theorem~\ref{thm:common_scaling_noiseless}, in the noiseless case,
\begin{equation}\label{eq:sep2d_error_example}
\|\hat{\mathbf g}-\mathbf g\|_\infty \le \frac{\beta}{2}=\frac{R\lambda}{200}.
\end{equation}
\end{example}

\begin{example}[Pair of one-dimensional arrays]\label{ex:two_1d_arrays}
We finally consider the pair of one-dimensional arrays shown in Fig.~\ref{fig:array_comparison}(d). 
One array lies on the $x$-axis and the other lies on the $z$-axis. 
Each array has spacing $5$ and contains $19$ elements, so the total number of elements is $37$. 
The design lies in a $95\times95$ platform and occupies a $90\times90$ region.

This design recovers the two components of $\mathbf g$ separately by two one-dimensional DFT-based constructions, one along the $x$ direction and the other along the $z$ direction. 
Hence, the unambiguous range of $\mathbf g$ is
\begin{equation}\label{eq:two1d_range_example}
\mathcal N\!\left(
\frac{R\lambda}{5}\mathbf I
\right).
\end{equation}

In the noiseless case, the reconstruction error satisfies
\begin{equation}\label{eq:two1d_error_example}
\|\hat{\mathbf g}-\mathbf g\|_\infty
\le
\frac{R\lambda}{190}.
\end{equation}
\end{example}

We now discuss the above four array designs in terms of the unambiguous range, the reconstruction error bound, and the role of antenna number and array geometry in robustness. 

\subsubsection{Unambiguous Range Comparison}

First consider the unambiguous range. For the uniform planar array design, the separable planar array design, and the pair of one-dimensional arrays, the recovery regions in \eqref{eq:dense_range_example}, \eqref{eq:sep2d_range_example}, and \eqref{eq:two1d_range_example} are identical. This is because these three designs are all separable, so the unambiguous range in each dimension is determined by the minimum antenna spacing along that dimension. Since the minimum spacing is $5$ in both dimensions for all three designs, their unambiguous ranges are all given by
\[
\mathcal N\!\left(\frac{R\lambda}{5}\mathbf I\right).
\]

The non-separable planar array design is different from the above three designs. 
Since it is non-separable, its unambiguous range is not determined per dimension by the corresponding minimum antenna spacings, as in the other three cases. 
Instead, it is determined by the matrix $\mathbf R$ through the region $\mathcal N(\beta\mathbf R)$ in \eqref{eq:nsep_range_example}. 
As a result, its unambiguous range is not rectangular.

To compare the size of this unambiguous range with those of the other three designs, it is more natural to compare the numbers of integer vectors contained in the corresponding unambiguous ranges. 
This is because of the following reason. 
In the integer-valued vector case, an FPD as a set of integer vectors is an unambiguous range as we explained earlier. 
For the general non-integer valued vector case, a vector is approximated by an integer vector by rounding, and then the integer vector will be in the consideration. 
Therefore, the number of integer vectors in an FPD provides a natural discrete measure of the size of the unambiguous range. 
This avoids a direct geometric comparison between 2D regions of different shapes.
For simplicity, assume that $R\lambda/5$ and $R\lambda/23$ are both integers. 
Then, for the three separable designs, the number of integer vectors in the unambiguous range is
$\left(\frac{R\lambda}{5}\right)^2=\frac{(R\lambda)^2}{25}$.
For the non-separable planar array design, by \eqref{eq:nsep_beta_example} and \eqref{eq:nsep_R_example}, the number of integer vectors in the unambiguous range is
\[
|\det(\beta\mathbf R)|=\frac{(R\lambda)^2}{23},
\]
which is larger than $(R\lambda)^2/25$. 
Therefore, the non-separable planar array design provides a larger unambiguous range than the other three designs. 
Although the improvement from $1/25$ to $1/23$ may not seem large, this is only an illustrative example and is not meant to be optimal. Better and optimal designs may exist.

This difference comes from the array geometry. 
For separable designs with minimum antenna spacing $5$, the smallest diagonal generator matrix in the absolute determinant value sense is $5\mathbf I$. 
In contrast, a non-diagonal generator matrix may still satisfy the same minimum antenna spacing requirement while having determinant $23$. 
As a result, under the common-scaling requirement, a non-separable geometry can provide a larger unambiguous range while using fewer antennas and occupying less platform area than the separable design in Example~\ref{ex:separable_planar}.

\subsubsection{Comparison of Reconstruction Error Bounds}

We next compare the reconstruction error bounds. 
We first consider the three separable designs, namely, the uniform planar array design, the separable planar array design, and the pair of one-dimensional arrays. 
For separable designs, the bound in each dimension is determined by the spacing and the number of antennas along that dimension. 
In Examples~\ref{ex:dense_planar} and \ref{ex:two_1d_arrays}, the spacing is $5$ and the number of antennas in each dimension is $19$. 
In Example~\ref{ex:separable_planar}, the spacing remains $5$, but the number of antennas in each dimension becomes $20$. 
This is why Example~\ref{ex:separable_planar} yields a smaller reconstruction error bound than Examples~\ref{ex:dense_planar} and \ref{ex:two_1d_arrays}. 
In contrast, the uniform planar array design in Example~\ref{ex:dense_planar} and the pair of one-dimensional arrays in Example~\ref{ex:two_1d_arrays} have the same reconstruction error bound, because their spacing and antenna counts are identical in each dimension.
Therefore, simply replicating the same one-dimensional structure along the other dimension does not improve the bound in either dimension.

We next compare the non-separable and separable common-scaling planar array designs in Examples~\ref{ex:nonseparable_planar} and \ref{ex:separable_planar}. 
Under the common-scaling design, the noiseless reconstruction error bound is determined by the common scaling factor
\[
\beta=\frac{R\lambda}{\alpha\det(\mathbf M)}.
\]
Therefore, a larger $\alpha$ or a larger absolute determinant value leads to a smaller bound. 
This explains why the separable planar array design in Example~\ref{ex:separable_planar} yields a smaller reconstruction error bound than the non-separable planar array design in Example~\ref{ex:nonseparable_planar}.

It is also important to note that the above bound is only a worst-case bound. 
More precisely, the reconstruction error bound of $\mathbf g$ is $\beta$ multiplied by the quantization error of $\mathbf f$, while the bound here is obtained by replacing the quantization error with its worst-case value. 
However, since $\mathbf f$ is obtained from the same $\mathbf g$ after scaling by $1/\beta$, a smaller $\beta$ does not necessarily imply a smaller actual quantization error of $\mathbf f$. 
In some cases, when $\beta$ becomes smaller, the quantization error of $\mathbf f$ may become larger. 
Therefore, when the quantization error term dominates, the actual reconstruction errors could be ordered in the different way from the worst-case bound. 
This effect will be seen later in the simulations.

\subsubsection{Antenna Number and Robustness}

We finally comment on the numbers of antenna elements. 
Comparing Examples~\ref{ex:dense_planar} and \ref{ex:two_1d_arrays}, we see that the uniform planar array design uses many more antennas, while the unambiguous range and the noiseless reconstruction error bound are the same. 
This shows that the noiseless metrics considered in this subsection do not fully reflect the benefit of using more antennas. 
As will be seen in Section~\ref{s5}, when additive noise is present, more antennas can improve robustness.

At the same time, robustness does not depend only on the number of antennas. 
Our simulations will also show that the non-separable design in Example~\ref{ex:nonseparable_planar} is more robust than the separable design in Example~\ref{ex:separable_planar}. 
This suggests that, in addition to the number of antennas, the array geometry also affects robustness. 
In robust MD-CRT, a non-separable matrix may provide better robustness than a separable one, and this is related to the geometry of the lattice generated by the gcld and we refer to \cite{mstage-mdcrt} for more details. 
The above discussion is based on only one representative example, and the construction of an optimal array design remains open.

\section{Simulation Results}\label{s5}

In this section, we present numerical results for the noisy case. 
The simulations are intended to illustrate how the proposed planar antenna array framework performs under additive noise. 
To keep the comparison focused on the recovery behavior itself, we adopt idealized settings for DFT peak detection and range-related information, rather than performing a full end-to-end simulation from raw transmitted and received radar signals. 
The experiments are organized to follow the two comparisons in Section~\ref{s4}. 
The first subsection compares the proposed planar array framework with the separated scheme, and the second compares different array designs in the noisy case.

Unless otherwise stated, the radar altitude is set to $H=4000~\text{m}$, the platform velocity is set to $v=200~\text{m/s}$, the cross-range resolution is set to $\Delta_x=2~\text{m}$, and the wavelength is set to $\lambda=0.05~\text{m}$. 
The terrain, road, and target position are generated numerically. 
Specifically, a road is first placed on the terrain surface, and a point target is then selected on the road according to a prescribed height and a normalized distance along the road. 
The target velocity is chosen along the local road direction with a prescribed speed, so that both target location and motion are determined by the generated road geometry.

\subsection{Comparison with the Separated Scheme}

In this subsection, we compare the proposed planar array framework with the separated scheme in the noisy case. 
The simulation setup follows the comparison in Section~\ref{s4.1}. 
The target height is set to about $95~\text{m}$. 
The target speed is set to $13~\text{m/s}$, and the target is placed at a point whose normalized position along the road is $0.3$. 
This gives the target position $\mathbf r_0=[x_0,y_0,z_0]^{\top}=[175.409811,\ 263.516684,\ 94.868794]^{\top}~\text{m}$ and the target velocity $\mathbf v_0=[v_x,v_y,v_z]^{\top}=[3.246923,\ 9.450347,\ 8.315553]^{\top}~\text{m/s}$. 
The SNR varies from $-10~\text{dB}$ to $10~\text{dB}$ with a step size of $1~\text{dB}$, and $2000$ Monte Carlo trials are used for each SNR.

In all the simulations in this section, additive white Gaussian noise is introduced at the level of the compensated complex measurements, and the SNR is defined as the ratio of the mean signal power to the noise power for the complex signals in, for example, \eqref{eq:s2_signal_model}. 
Since each planar subarray in the proposed method contains $100$ antennas, the separated one-dimensional scheme is also implemented with a $100$-point zero-padded DFT for a fair comparison. 
To show the behavior of the two parameters more clearly, we plot the RMSEs of $\Delta_{\rm shift}$ and $z_0$ separately.

\begin{figure*}[htbp]
    \centering
    \begin{subfigure}[t]{0.45\textwidth}
        \centering
        \includegraphics[width=\linewidth]{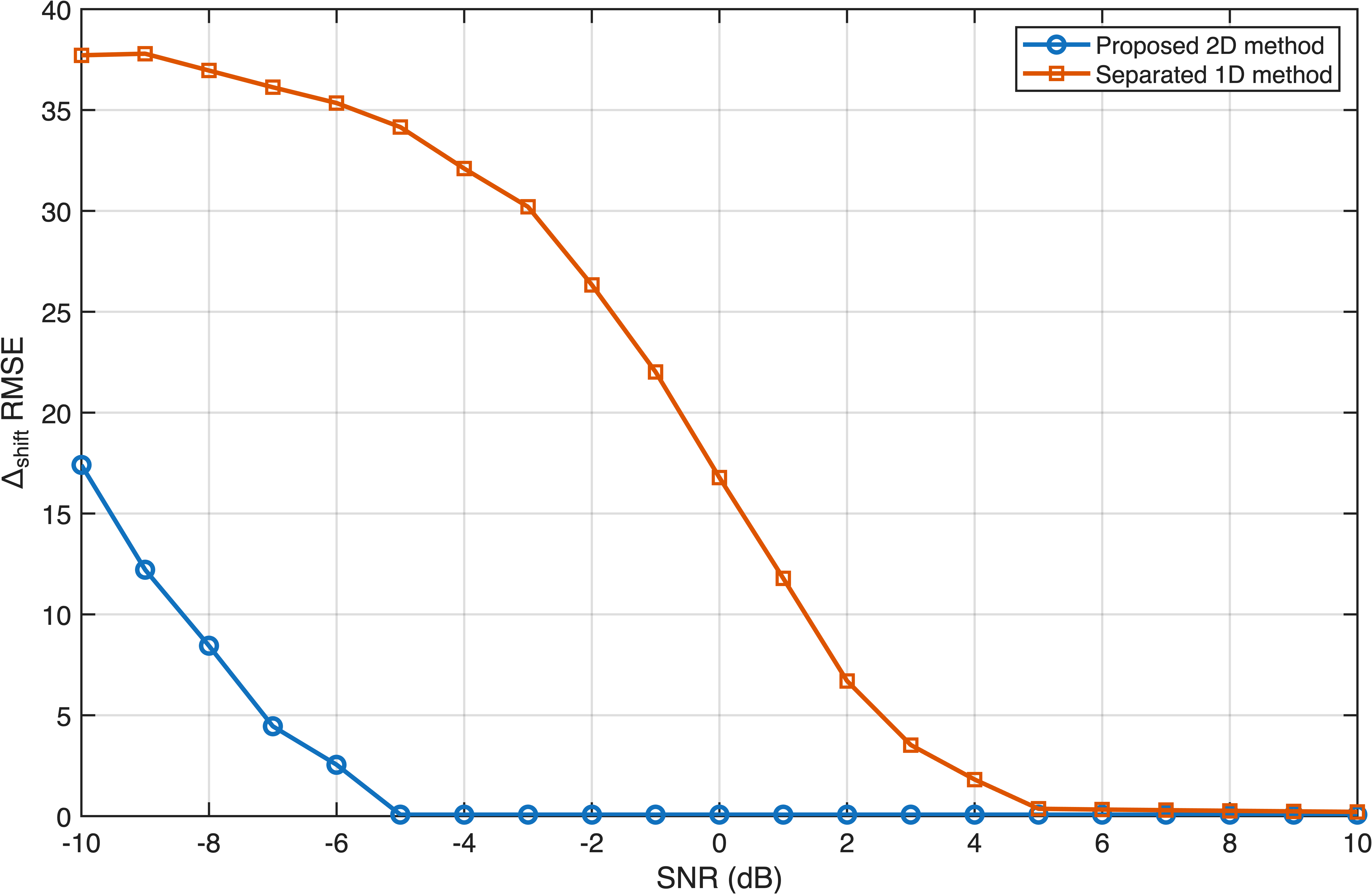}
        \caption{RMSE of $\Delta_{\rm shift}$ versus SNR.}
        \label{fig:sim1_shift_rmse}
    \end{subfigure}
    \hfill
    \begin{subfigure}[t]{0.45\textwidth}
        \centering
        \includegraphics[width=\linewidth]{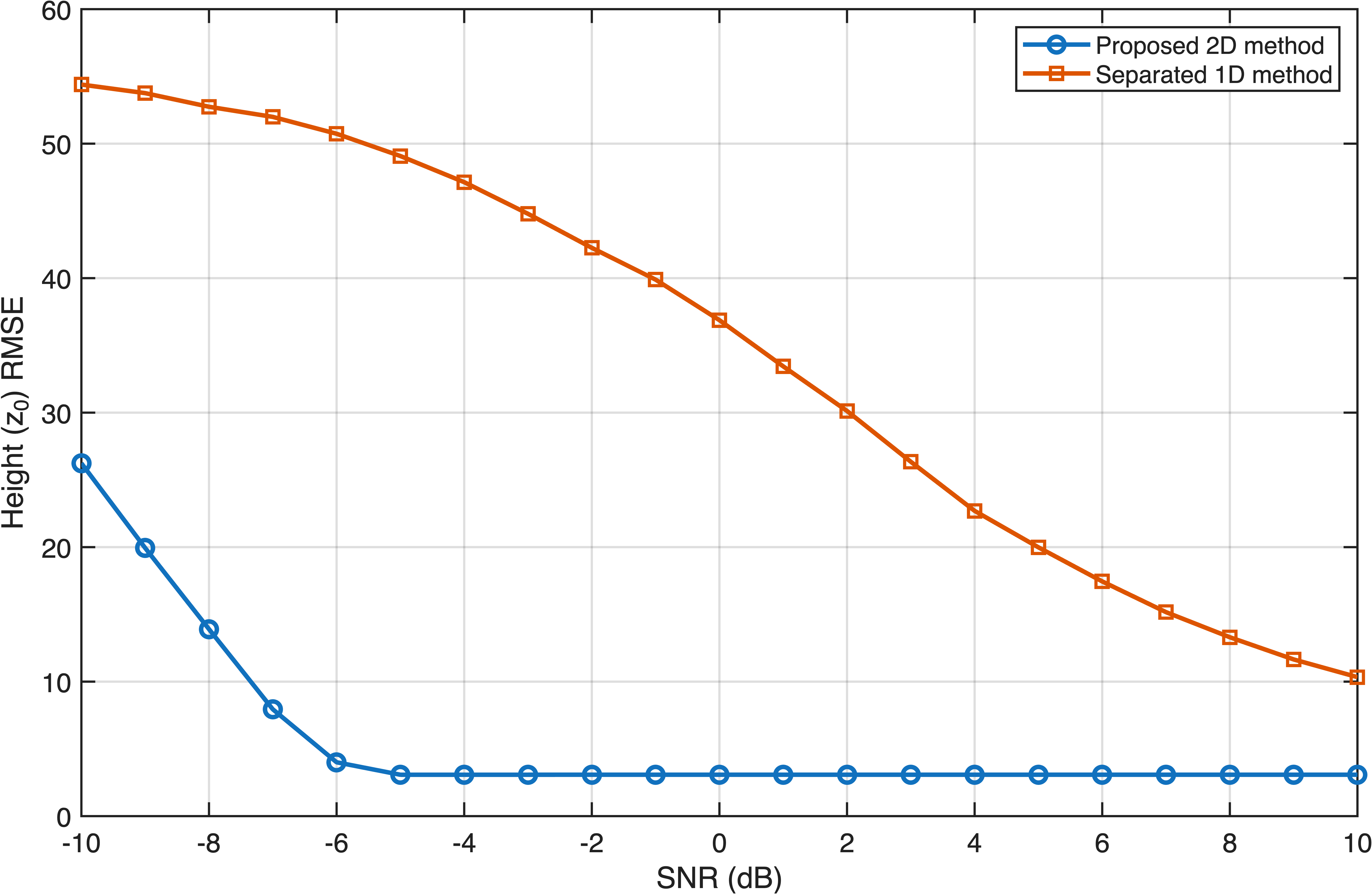}
        \caption{RMSE of $z_0$ versus SNR.}
        \label{fig:sim1_z_rmse}
    \end{subfigure}
    \caption{Comparison between the proposed 2D method and the separated 1D method.}
    \label{fig:sim1_rmse}
\end{figure*}

The results are shown in Fig.~\ref{fig:sim1_rmse}. 
For both $\Delta_{\rm shift}$ and $z_0$, the proposed method achieves a much smaller RMSE than the separated scheme over the whole SNR range. 
The difference is especially clear in the low- and moderate-SNR regions, where the proposed method reaches a small error level much earlier, while the separated scheme improves more slowly. 
This behavior is consistent with the analysis in Section~\ref{s4.1}, where the two schemes were shown to have the same unambiguous range in this example, but the proposed common-scaling planar array design admits a weaker robust recovery condition and a smaller theoretical error bound for the shift. 
These advantages imply that, under noise, robust recovery is easier to achieve and the resulting recovery error is smaller, which is consistent with the lower RMSE observed in Fig.~\ref{fig:sim1_rmse}. 
The noisy case results therefore show that, under the same horizontal platform length constraint, the proposed framework remains much more robust, although the number of antennas may also affect the noisy case performance.

The gain for $z_0$ is also clear. 
In the separated scheme, the height is estimated directly from the cross-track interferometric phase.
This can be thought of as a two-point DFT in the vertical direction and is more sensitive to noise.  
In contrast, the proposed method yields a much smaller RMSE for $z_0$ over the whole SNR range. 
This indicates that even when two schemes have the same theoretical unambiguous range, their practical recovery performance can still be significantly different.

\subsection{Comparison Among Different Array Designs}

In this subsection, we compare the four array designs in Section~\ref{s4.2} in the noisy case. 
Since the four designs do not have exactly the same unambiguous range, we choose a target that lies in the common unambiguous range of all of them. 
In this way, every design is tested within its own unambiguous range. 
The simulation setup follows the comparison in Section~\ref{s4.2}. 
The target height is set to about $35~\text{m}$. 
The target speed is set to $4~\text{m/s}$, and the target is placed at a point whose normalized position along the road is $0.3$. 
This gives the target position $\mathbf r_0=[x_0,y_0,z_0]^{\top}=[143.228351,\ 169.850773,\ 34.869242]^{\top}~\text{m}$ and the target velocity $\mathbf v_0=[v_x,v_y,v_z]^{\top}=[1.135671,\ 3.305432,\ 1.945346]^{\top}~\text{m/s}$. 
The SNR varies from $-10~\text{dB}$ to $5~\text{dB}$ with a step size of $1~\text{dB}$, and $2000$ Monte Carlo trials are used for each SNR. 
Since the purpose here is to compare the overall recovery performance of the four designs, we use the mean of $\|\hat{\mathbf g}-\mathbf g\|_\infty$ as the error measure. 
This quantity reflects the reconstruction error of the worse component in each trial.

\begin{figure}[htbp]
    \centering
    \includegraphics[width=0.9\linewidth]{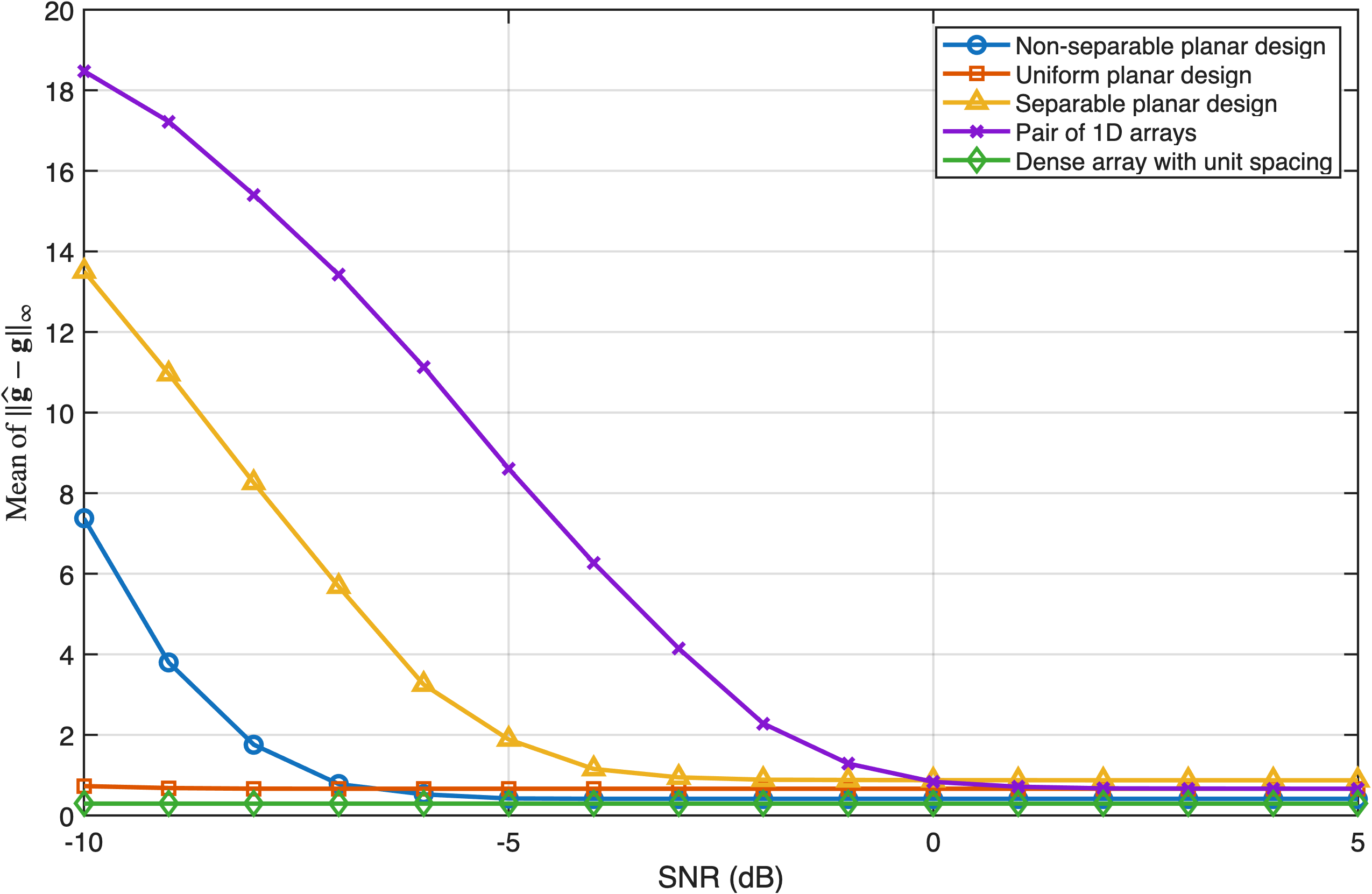}
    \caption{Mean of $\|\hat{\mathbf g}-\mathbf g\|_\infty$ versus SNR for four array designs.}
    \label{fig:sim2_g_linf_mean}
\end{figure}

The results are shown in Fig.~\ref{fig:sim2_g_linf_mean}.  
The uniform planar array design gives the smallest error at very low SNR and remains stable over the whole SNR range, with a nearly flat error floor. 
The non-separable planar array design starts from a higher error level, but its error decreases much more rapidly as the SNR increases, and it soon becomes the best-performing design among the four. 
By contrast, the separable planar array design and the pair of one-dimensional arrays have much larger errors at low SNR, and their errors decrease more slowly. 

As the SNR enters the high-SNR region, the gap among the four designs becomes much smaller. 
In this region, the non-separable planar array design gives the smallest error, the uniform planar array design and the pair of one-dimensional arrays are nearly the same, and the separable planar array design gives the largest error. 
This ordering is different from the noiseless reconstruction error bound  comparison in Section~\ref{s4.2}. 
As discussed at the end of that subsection, the theoretical bound is only a worst-case bound. 
In the noisy case setting, different values of $\beta$ lead to different quantization errors in the corresponding integer vector $\mathbf f$. 
When this effect dominates, the actual reconstruction errors could be ordered differently from the worst-case bound.

As noted earlier, all the four designs can be viewed as subarrays selected from the same $95\times95$ dense planar array with unit spacing, and each of the four satisfies the same minimum spacing requirement of $5$. 
For reference, we also plot the performance of this dense planar array. 
It gives the smallest error over the whole SNR range, which is consistent with the fact that the four sparse designs are all selected from it. 
Although the dense planar array performs the best, it uses the largest number of antennas, namely $9216$, which is much larger than that of any of the above four compared arrays.

The above results of the four planar array designs show that noisy case performance is determined by both antenna number and array geometry. 
To understand this behavior more clearly, we consider these two effects separately. 
The role of antenna number is especially clear from the comparison between the uniform planar array design and the pair of one-dimensional arrays. 
Although these two designs have the same unambiguous range and the same reconstruction error bound in the noiseless case analysis, the uniform planar array design is much more robust under noise. 
The non-separable and separable planar array designs lie between these two extremes in terms of antenna number, and their noisy case performance also lies between them. 
This shows that a larger number of antennas can indeed improve robustness.

\begin{figure*}[htbp]
    \centering
    \begin{subfigure}[t]{0.45\textwidth}
        \centering
        \includegraphics[width=\linewidth]{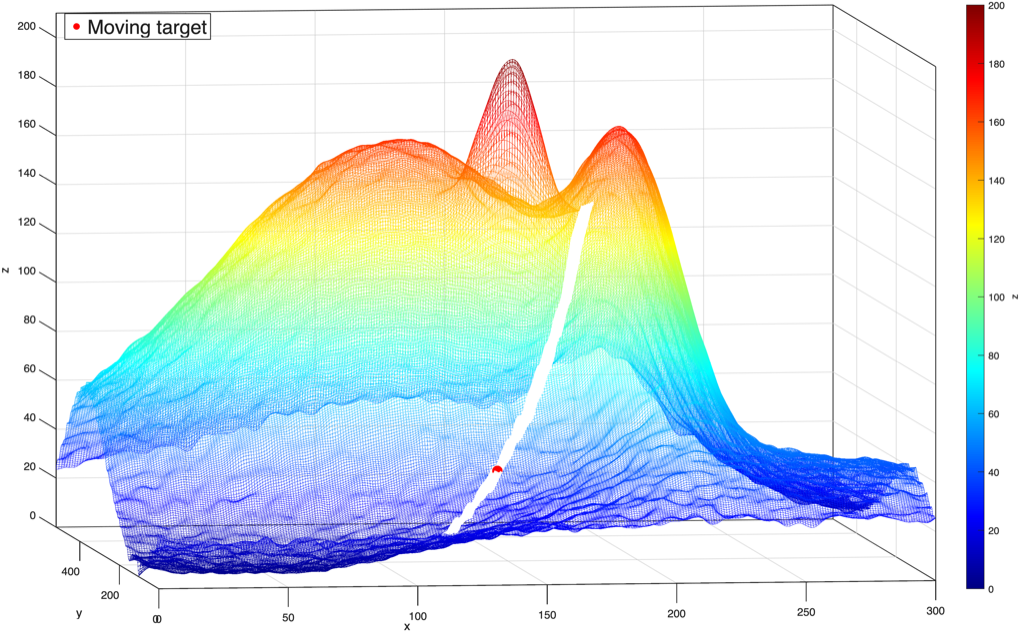}
        \caption{Non-separable planar array design.}
        \label{fig:sim2_3d_1}
    \end{subfigure}
    \hfill
    \begin{subfigure}[t]{0.45\textwidth}
        \centering
        \includegraphics[width=\linewidth]{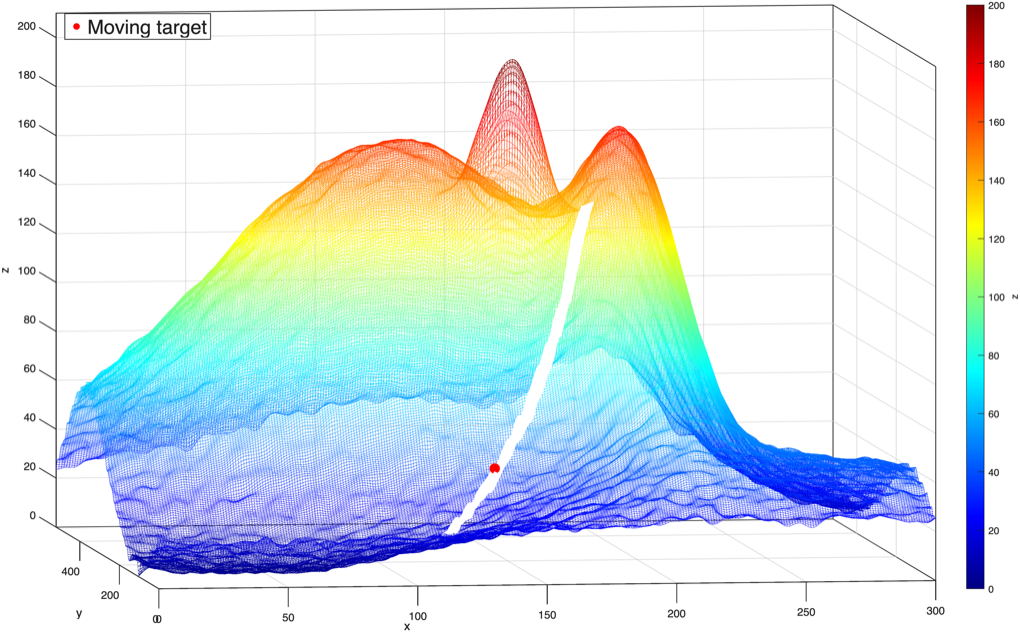}
        \caption{Uniform planar array design.}
        \label{fig:sim2_3d_2}
    \end{subfigure}

    \vspace{0.5em}

    \begin{subfigure}[t]{0.45\textwidth}
        \centering
        \includegraphics[width=\linewidth]{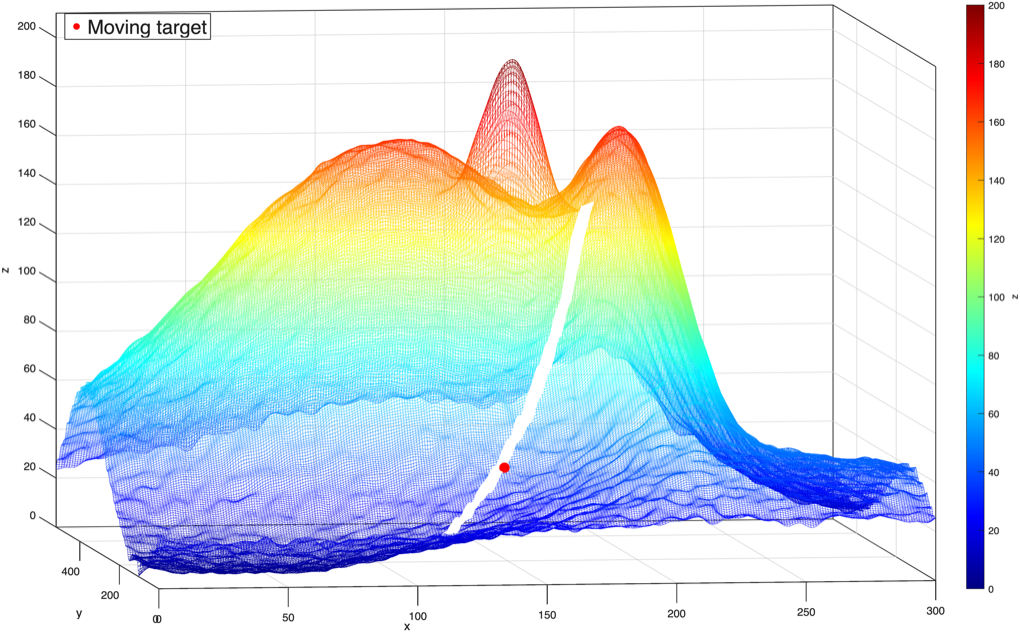}
        \caption{Separable planar array design.}
        \label{fig:sim2_3d_3}
    \end{subfigure}
    \hfill
    \begin{subfigure}[t]{0.45\textwidth}
        \centering
        \includegraphics[width=\linewidth]{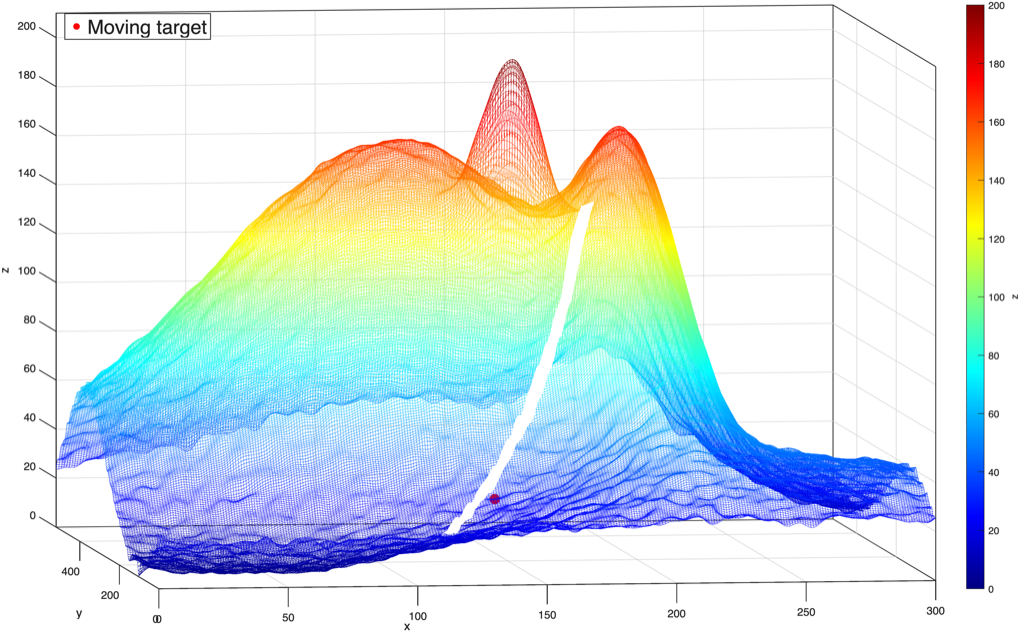}
        \caption{Pair of one-dimensional arrays.}
        \label{fig:sim2_3d_4}
    \end{subfigure}
    \caption{Recovered target locations at $\text{SNR}=-5~\text{dB}$ for four array designs. The reconstructed targets are marked by red points.}
    \label{fig:sim2_3d}
\end{figure*}

However, antenna number is not the only factor. 
The comparison between the non-separable planar array design and the separable planar array design is especially important here. 
In the noisy case, both designs rely on robust MD-CRT for recovery. 
Although Theorem~\ref{thm:common_beta_noisy} gives a sufficient condition for robust recovery, it may not fully characterize the actual robustness behavior. 
A more accurate understanding would require the necessary and sufficient condition in~\cite{MD2}, where robustness is linked to the lattice geometry induced by the gcld matrix. 
How to use that condition to precisely characterize the robustness difference between different planar array designs remains open. 
In the above, both designs satisfy the same minimum antenna spacing requirement of $5$. 
Moreover, as discussed earlier, the separable design corresponds to the diagonal gcld matrix with determinant $25$, whereas the non-separable design considered here corresponds to a non-diagonal gcld matrix with determinant $23$. 
This suggests that the observed robustness difference is related to the array geometry rather than antenna number alone. 
Further discussion on the robustness difference between non-separable and separable designs may be found in~\cite{mstage-mdcrt}. 
The above discussion is based on only one representative example, and the problem of optimal array design for robustness remains open as well.

To further illustrate the recovery behavior, we also show the reconstructed target locations in three-dimensional space at $\mathrm{SNR}=-5~\mathrm{dB}$. These plots provide a visual counterpart to the quantitative comparison in Fig.~\ref{fig:sim2_g_linf_mean}. In this experiment, the range coordinate $y_0$ is assumed to be known. The reconstructed target positions are marked by red points.

The results are shown in Fig.~\ref{fig:sim2_3d}. 
For the non-separable planar array design and the uniform planar array design, the recovered target locations remain on the road. 
For the separable planar array design, the recovered target is slightly shifted away from the road. 
In contrast, for the pair of one-dimensional arrays, the recovered target is shifted much more noticeably and appears on the lower-right side of the road. 
This is consistent with their performance at $\mathrm{SNR}=-5~\mathrm{dB}$ in Fig.~\ref{fig:sim2_g_linf_mean}.

Overall, the numerical results in this section support the theoretical analysis in Section~\ref{s4}. 
They show that the proposed planar array framework yields better robustness than the separated scheme, and that, among different array designs, both antenna number and array geometry play essential roles in noisy case  recovery.

\section{Conclusion}\label{s6}

In Part~I of this two-part work, we established a general multidimensional Chinese remainder theorem (MD-CRT) framework for moving target SAR imaging with planar antenna arrays, including the associated matrix-modulus formulation, ambiguity resolution framework, and robust recovery analysis. 
In this Part II, we focused on performance analysis and planar array design within the framework.

We studied a common-scaling two-subarray design, under which the two planar subarrays share the same scaling factor in the corresponding MD-CRT, which is not possible for one-dimensional linear antenna arrays. 
This design preserves the unambiguous range of the general formulation while simplifying the recovery procedure in both noiseless and noisy settings. 
As a result, the associated robust recovery conditions become weaker and the corresponding reconstruction error bounds become tighter, leading to better robustness.

Theoretical analysis and numerical results showed that, under the same unambiguous range, the proposed planar array framework offers clear advantages over the conventional separated processing in terms of robustness. 
They also showed that recovery performance depends not only on the number of antennas but also on the array geometry. 
In particular, non-separable planar array geometries can provide better robustness than separable ones when their antenna numbers are comparable. 
These results suggest that planar array geometry should be regarded as an integral part of ambiguity resolution and robust parameter estimation.
Future work includes the design of optimized planar array geometries under practical platform size, spacing, and noise constraints.

\end{document}